%% 
%% Copyright 2007-2020 Elsevier Ltd
%% 
%% This file is part of the 'Elsarticle Bundle'.
%% ---------------------------------------------
%% 
%% It may be distributed under the conditions of the LaTeX Project Public
%% License, either version 1.2 of this license or (at your option) any
%% later version.  The latest version of this license is in
%%    http://www.latex-project.org/lppl.txt
%% and version 1.2 or later is part of all distributions of LaTeX
%% version 1999/12/01 or later.
%% 
%% The list of all files belonging to the 'Elsarticle Bundle' is
%% given in the file `manifest.txt'.
%% 
%% Template article for Elsevier's document class `elsarticle'
%% with harvard style bibliographic references

\documentclass[preprint,11pt,authoryear]{elsarticle}

%% Use the option review to obtain double line spacing
%% \documentclass[authoryear,preprint,review,12pt]{elsarticle}

%% Use the options 1p,twocolumn; 3p; 3p,twocolumn; 5p; or 5p,twocolumn
%% for a journal layout:
%% \documentclass[final,1p,times,authoryear]{elsarticle}
%% \documentclass[final,1p,times,twocolumn,authoryear]{elsarticle}
%% \documentclass[final,3p,times,authoryear]{elsarticle}
%% \documentclass[final,3p,times,twocolumn,authoryear]{elsarticle}
%% \documentclass[final,5p,times,authoryear]{elsarticle}
%% \documentclass[final,5p,times,twocolumn,authoryear]{elsarticle}

%% For including figures, graphicx.sty has been loaded in
%% elsarticle.cls. If you prefer to use the old commands
%% please give \usepackage{epsfig}

\usepackage{xcolor}

%% The amssymb package provides various useful mathematical symbols
\usepackage{amssymb}
%% The amsthm package provides extended theorem environments
\usepackage{amsthm}

\usepackage{amsmath}

% Add the setspace package
\usepackage{setspace}

\onehalfspacing

% Define the definition environment
\theoremstyle{plain}
\newtheorem{thm}{Theorem}
\newtheorem{lemma}[thm]{Lemma}
\newtheorem{proposition}[thm]{Proposition}
\newtheorem{example}{Example}

\theoremstyle{plain}
\newtheorem{defn}{Definition}

\theoremstyle{remark}

\newtheorem*{rem}{Remark}

%%%% my commands
\newcommand{\R}{\ensuremath{\mathbb{R}}}
\newcommand{\E}{\ensuremath{\mathbb{E}}}
\def\DCdrpm{\textbf{DC}^{\hspace{0.2ex}(d,r_\pm)}(v)}

\def\DCdr{\textbf{DC}^{\hspace{0.2ex}(d,r)}(v)}
\def\DCr{\textbf{DC}^{\hspace{0.4ex}r}(v)}
\def\UDCdr{\textbf{UDC}^{\hspace{0.2ex}(d,r)}(v)}
\def\aS{a_{v(S)}}
\def\aS{a_{v(S)}}
\def\aN{a_{v(N)}}
\def\bS{b_{v(S)}}
\def\bN{b_{v(N)}}
\def\core{\mathcal{C}}
\def\dev{\hat{\sigma}}
\def\var{\textbf{Var}}

%% The lineno packages adds line numbers. Start line numbering with
%% \begin{linenumbers}, end it with \end{linenumbers}. Or switch it on
%% for the whole article with \linenumbers.
%% \usepackage{lineno}

\journal{Games and Economic Behavior}

\begin{document}

\begin{frontmatter}

%% Title, authors and addresses

%% use the tnoteref command within \title for footnotes;
%% use the tnotetext command for theassociated footnote;
%% use the fnref command within \author or \affiliation for footnotes;
%% use the fntext command for theassociated footnote;
%% use the corref command within \author for corresponding author footnotes;
%% use the cortext command for theassociated footnote;
%% use the ead command for the email address,
%% and the form \ead[url] for the home page:
%% \title{Title\tnoteref{label1}}
%% \tnotetext[label1]{}
%% \author{Name\corref{cor1}\fnref{label2}}
%% \ead{email address}
%% \ead[url]{home page}
%% \fntext[label2]{}
%% \cortext[cor1]{}
%% \affiliation{organization={},
%%            addressline={}, 
%%            city={},
%%            postcode={}, 
%%            state={},
%%            country={}}
%% \fntext[label3]{}

%\title{Stability of payoff allocation under risk averse players}
\title{Stochastic cooperative games of risk averse players and application to multiple newsvendors problem}

%% use optional labels to link authors explicitly to addresses:
%% \author[label1,label2]{}
%% \affiliation[label1]{organization={},
%%             addressline={},
%%             city={},
%%             postcode={},
%%             state={},
%%             country={}}
%%
%% \affiliation[label2]{organization={},
%%             addressline={},
%%             city={},
%%             postcode={},
%%             state={},
%%             country={}}

\author[inst1]{David Ryz\'{a}k}

\affiliation[inst1]{organization={Department of Applied Mathematics, Charles University},%Department and Organization
            %addressline={Malostransk\'{e} square}, 
            city={Prague},
            %postcode={11000}, 
            country={Czech Republic}}

\author[inst1]{Martin \v{C}ern\'{y}}

\begin{abstract}
%% Text of abstract
This paper studies the stochastic setting in cooperative games and suggests a solution concept based on second order stochastic dominance (SSD), which is often applied to robustly model risk averse behaviour of players in different economic and game theoretic models as it enables to model not specified levels of risk aversion among players. The main result of the paper connects this solution concept, \emph{SSD-core}, in case of uniform distribution of the game to cores of two deterministic cooperative games. Interestingly, balancedness of both of these games and convexity of one of these implies non-emptiness of the SSD-core. The opposite implication does not, in general, hold and leads to questions about intersections of cores of two games and their relations. Finally, we present an application of the SSD-core to the multiple newsvendors problem, where we provide a characterization of risk averse behaviour of players with an interpretation in terms of the model.
\end{abstract}

\begin{keyword}
%% keywords here, in the form: keyword \sep keyword
game theory \sep cooperative game \sep core \sep risk aversion \sep stochastic dominance
%% PACS codes here, in the form: \PACS code \sep code
\PACS 02.50.Cw \sep 89.65.Gh
%% MSC codes here, in the form: \MSC code \sep code
%% or \MSC[2008] code \sep code (2000 is the default)
\MSC 91A12 \sep 91B06
\end{keyword}

\end{frontmatter}

\section{Introduction}
Risk-averse behavior is a fundamental aspect of decision-making in economics and game theory. It reflects the tendency of individuals to prefer outcomes with less uncertainty, even if they may have lower expected returns. This behavior is especially significant in cooperative games, where groups of players collaborate to achieve mutual benefits under uncertainty. Numerous models incorporate risk aversion into economic and game-theoretic analyses. For instance, the expected utility theory employs specific utility functions to model risk preferences, while risk measures like variance or Value-at-Risk are commonly used in finance and operations research to account for risk-averse decision, (\cite{arrow1965aspects, pratt1964risk}).

Second-order stochastic dominance (SSD) offers a robust and general approach to modeling risk-averse behavior without specifying exact levels of risk aversion. SSD provides a partial ordering of random variables that aligns with the preferences of all risk-averse decision-makers (\cite{hadar1969rules, rothschild1970increasing}). It has been extensively used in portfolio selection, insurance, and decision analysis to compare uncertain prospects when precise utility functions are unknown or difficult to specify (\cite{levy1992stochastic, post2003empirical}).

One of the pioneering works of cooperative games with stochastic characteristic functions is by
\cite{charnes1973prior}, who introduced solution concepts based on chance-constrained programming. Their approach considers randomness in the characteristic function but relies on specific probabilistic constraints. In the context of incorporating risk aversion, \cite{suijs1999cooperative} made a significant contribution by introducing core-like solution concepts for cooperative games with stochastic characteristic functions, grounded in specific risk preferences. They reduced payoffs to functions of real numbers and random variables, using preferences that induce a total order on random variables. For example, they considered preferences of the form 
$\E [X]+b\cdot \text{Var}(X)$, where $b \in \R$ represents a risk parameter, capturing both risk-averse ($b < 0$) and risk-loving ($b>0$) behaviors. However, determining the appropriate $b$ for each player can be impractical or resource-intensive, and the analysis depends on the chosen $b$. \cite{fernandez2002cores} explored cores under stochastic order, which is related to stochastic dominance. Their work focused on different questions but highlighted the importance of ordering random variables in cooperative games under uncertainty. The research in this area remains active, which is demonstrated by a work of~\cite{sun2022optimization}. They introduce new solution concepts based on the ideas of~\cite{suijs1999cooperative}. Specifically, constraint optimization techniques are used to derive solution concepts that optimize various objective functions based on the characteristic function.

In this paper, we introduce a new solution concept called the \emph{SSD-core} for cooperative games with stochastic characteristic functions. Our approach does not require specifying exact levels of risk aversion for each player. Instead, it leverages second-order stochastic dominance to ensure that the proposed allocations are acceptable to all risk-averse players. This makes our solution concept more general and applicable to a broader range of scenarios compared to previous models. It also enhances robustness with respect to slight variations in levels of risk aversion of players.

We apply our SSD-core to the multiple newsvendors problem, a classic example in inventory management and cooperative game theory. The multiple newsvendors problem extends the single newsvendor problem by considering several individuals selling newspapers, where decisions about ordering and cooperation affect their profits under demand uncertainty. The fundamental question is: How many newspapers should the newsvendors purchase to maximize their profits while considering the risk associated with uncertain demand?
In a competitive setting, this question is not difficult to answer under reasonable assumptions. However, when cooperation among agents is introduced, the problem becomes more complex due to the various possible cooperative arrangements. The two primary approaches are either placing a single collective order or pooling inventories.

Previous studies, such as \cite{ozen2011convexity}, have analyzed the convexity and the core of the multiple newsvendors game with respect to expected profits, assuming risk-neutral players. \cite{hartman2000cores} formulated the problem as a cost game with penalties and investigated the core under various demand distributions, including symmetric and normal distributions. \cite{dror2011survey} provided broader perspectives on inventory games, including joint-replenishment and dynamic lot-sizing games. Finally, \cite{yang2021multilocation} propose four variants of cooperation in inventory games, which effectively summarize how players can collaborate through either centralized ordering or by fulfilling each other's demand (pooling inventories).

Our approach differs from mentioned ones by modeling the game as inherently random and focusing on risk-averse players. We limit our analysis to games where a centralized order is placed, and newsvendors can fulfill each other's needs, for example, by sending newspapers to one another or recommending other stores to customers. This aligns with the model proposed by \cite{ozen2011convexity} and the system described by \cite{yang2021multilocation}, referred to as \emph{CP} (\emph{centralized order with pooled inventories}).

By applying the SSD-core, we can identify fair and stable allocations acceptable to all risk-averse newsvendors, without needing to specify their exact levels of risk aversion. This approach not only broadens the application of cooperative game theory to risk-averse scenarios but also offers practical insights for inventory management under uncertainty.

In summary, our contributions are twofold: we develop a new solution concept for cooperative games with stochastic characteristic functions based on second-order stochastic dominance, and we demonstrate its applicability through the multiple newsvendors problem. This approach addresses the limitations of previous models that require specific risk parameters or utility functions and offers a universal framework for cooperative decision-making among risk-averse players.

\section{Model and stochastic dominance core}

We initiate this section by a general definition of \emph{stochastic cooperative game}, which extends the definition of a standard cooperative game. We define different types of stochastic payoffs and further formulate the \emph{SSD-core}. We conclude with auxiliary definitions and results from classical cooperative game theory necessary in our analysis of the SSD-core.

\begin{defn}[Stochastic TU-game]
    \emph{Stochastic TU-game} is a pair $(N,v)$, where $N=\{1,2,\ldots,n\}$ is a set of players and $v=(v(S))_{S\subseteq N}$ is a multivariate random variable.
\end{defn}
In this text, we restrict to one-dimensional random variables $v(S)$ for every $S \subseteq N$, similarly to~\cite{suijs1999cooperative}. The tradeoff in generality results in broader applicability, as it is often more practical in real-world scenarios to only assume marginal distributions of $v(S)$.

Players' payoffs are regarded as random vectors, however, it is often desirable to consider special types, simplifying both their analysis and representability (see~\ref{app:unstructed-allocations} for more discussion).  The following two types reflect that players agree on a proportional division of the realization of $v(N)$.
\begin{defn}[Stochastic payoffs]\label{def:types_of_payoffs}
    For a stochastic TU-game $(N,v)$, a random vector $(x_i)_{i \in N}$ is a \emph{stochastic payoff}
\begin{itemize}
        \item \emph{without transfer payments} if $\forall i \in N:$ \ $x_i=r_i \cdot v(N)$, where  $r_i\geq0$, 
        \item \emph{with transfer payments} \
 if  $\forall i \in N:$\ $x_i=d_i+r_i (v(N)-\E[v(N)])$, where $d_i\in \R$ and $ r_i\geq0$.
    \end{itemize}
    Further, these payoffs are \emph{efficient} if $d(N) = \E[v(S)]$ and $r(N) = 1$.
\end{defn}

Definition~\ref{def:types_of_payoffs} introduces vectors $r \in \mathbb{R}^n$ and $(d,r) \in \R^{2n}$, which can equally represent stochastic payoffs without and with transfer payments,. We refer to these vectors as \emph{allocations of players} when emphasizing the specific values $d_i, r_i \in \R$, or as \emph{types of allocations} when focusing on the nature of the stochastic payoffs.

Before we formally introduce the SSD-core, we recall the second order stochastic dominance and present conditions for various distributions under which stochastic dominance occurs. Results in Lemma~\ref{lemma:ssd_conditions} are straightforward and their proofs can be found in~\cite{wolfstetter1993stochastic}.
\begin{defn}[Second order stochastic dominance (SSD)]\label{def:second_stoch_dom}
       Let $X,Y$ be random variables and $F_X,F_Y$ their cumulative distribution functions. We say \emph{$X$ stochastically dominates $Y$ in second order sense}, $X\succeq_{SSD} Y$, if 
       \begin{equation*}
        \forall u\in \R : \int_{-\infty}^{u} ( F_X(z)- F_Y(z) ) \,dz\leq 0,
        \end{equation*}
    or equivalently 
    if for all concave utility functions $u$, i.e., utility functions for which $\forall x \in \R:$ $u''(x)\leq 0$, it holds
    \begin{equation}\label{def:SSD}
        \E [u(X)]\geq \E [u(Y)].
    \end{equation}
\end{defn}
\begin{lemma}[SSD conditions]\label{lemma:ssd_conditions}
    Let $X,Y$ be random variables. The following conditions are characterization of the relation  $X\succeq_{SSD} Y$ for various distributions:
    \begin{itemize}
        \item $\mu_X\geq \mu_Y$ and $\sigma_X^2\leq \sigma_Y^2$ if $X$ and $Y$ are normally distributed as $X\sim N(\mu_X,\sigma_X^2)$ and $Y\sim N(\mu_Y,\sigma_Y^2)$.
        \item $a_X \geq a_Y$ and $b_Y \leq b_X +(a_X-a_Y)$, or the latter equivalently $\E [X]\geq \E[Y]$, if $X$ and $Y$ are uniformly distributed as $X\sim U[a_X,b_X]$ and $Y\sim U[a_Y,b_Y]$.   
        \item  $k_X\cdot\theta_X\geq k_Y\cdot \theta_Y$ and $\theta_X\geq \theta_Y$ if $X$ and $Y$ are gamma distributed as $X\sim \Gamma(k_X,\theta_X)$ and $Y\sim \Gamma(k_Y,\theta_Y)$, where $k$ is the shape parameter and $\theta$ is the scale parameter.
        \item  $\forall k\in\{1,2,\ldots,K\}:\ \sum_{i=1}^kx_i\geq \sum_{i=1}^ky_i$ if $X$ and $Y$ are discretely uniformly distributed with realizations $x_1\leq x_2\leq\ldots,x_K$ and $y_1\leq y_2\leq\ldots,y_K$ and each of the realization having probability $\frac{1}{K}$.  
    \end{itemize}
\end{lemma}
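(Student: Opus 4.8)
The plan is to verify the four characterizations one at a time, directly from the integral criterion in Definition~\ref{def:second_stoch_dom}: $X \succeq_{SSD} Y$ holds exactly when $G(u) := \int_{-\infty}^u (F_X(z) - F_Y(z))\,dz \le 0$ for every $u \in \R$. Two elementary reductions carry most of the load. First, the identity $\int_{-\infty}^u F_X(z)\,dz = \E[(u-X)^+]$ rewrites the criterion as $\E[(u-X)^+] \le \E[(u-Y)^+]$ for all $u$, which is convenient in the discrete case. Second, I record the boundary values $G(-\infty)=0$ and $G(+\infty)=\E[Y]-\E[X]$ (the latter from the standard identity $\int_{\R}(F_Y-F_X)=\E[X]-\E[Y]$); hence $X \succeq_{SSD} Y$ forces $\E[X]\ge \E[Y]$, and, conversely, whenever $\E[X]\ge\E[Y]$ and $F_X-F_Y$ is \emph{single-crossing} with its negative part first (i.e. $F_X \le F_Y$ on a left half-line and $F_X \ge F_Y$ thereafter), the function $G$ decreases then increases and so attains its maximum $0$ at $u=-\infty$, giving $G \le 0$ throughout. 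Each case then reduces to checking the mean inequality together with this single-crossing pattern.

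For the normal case I would prove sufficiency by a noise decomposition: if $\mu_X\ge\mu_Y$ and $\sigma_X^2\le\sigma_Y^2$, let $W\sim N(0,\sigma_Y^2-\sigma_X^2)$ be independent of $X$; then $X+W\sim N(\mu_X,\sigma_Y^2)$ is a mean-preserving spread of $X$, so $X\succeq_{SSD}X+W$, while $X+W$ first-order dominates $Y=N(\mu_Y,\sigma_Y^2)$ since $\mu_X\ge\mu_Y$, and FSD implies SSD; transitivity finishes it. For necessity, $\mu_X\ge\mu_Y$ is immediate from the boundary value, and the variance inequality follows from tail behaviour: if $\sigma_X^2>\sigma_Y^2$ then $X$ has the heavier left tail, so $F_X(u)>F_Y(u)$ for all sufficiently negative $u$, making $G(u)>0$ there and contradicting SSD. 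The uniform case I would treat by direct computation: both CDFs are piecewise linear with breakpoints among $a_Y,a_X,b_Y,b_X$, the hypothesis $a_X\ge a_Y$ forces $F_X\le F_Y$ just to the right of $a_Y$, and $\E[X]\ge\E[Y]$ is precisely the stated condition $b_Y\le b_X+(a_X-a_Y)$; checking that $F_X-F_Y$ crosses zero only once then reduces the criterion to these two inequalities, while their failure produces a sign violation of $G$ near $-\infty$ or $+\infty$.

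The gamma case is the one I expect to be the main obstacle, since the CDFs are not elementary. The conditions are $\theta_X\ge\theta_Y$ together with the mean inequality $k_X\theta_X\ge k_Y\theta_Y$, and the clean route is again to establish that $F_X-F_Y$ is single-crossing with negative part first: I would compare the two densities, show they cross in the pattern that makes $F_X-F_Y$ change sign exactly once from below (using that a larger scale $\theta_X$ yields a heavier right tail and lighter concentration near the origin), and then invoke the reduction of the first paragraph; necessity of each condition comes from the mean boundary value and from the left-tail behaviour governed by the scale. The discrete uniform case is the most transparent: using $\E[(u-X)^+]=\frac1K\sum_i(u-x_i)^+$, the requirement $\E[(u-X)^+]\le\E[(u-Y)^+]$ for all $u$ need only be tested at the atoms, where it collapses to $\sum_{i=1}^k x_i\ge\sum_{i=1}^k y_i$ for every $k$, exactly the stated condition. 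Throughout, I would keep in mind that the equivalence in Definition~\ref{def:second_stoch_dom} is used for nondecreasing concave utilities, which is what makes the strictly-higher-mean cases (e.g. $\mu_X>\mu_Y$) compatible with dominance.
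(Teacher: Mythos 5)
The paper itself offers no proof of this lemma---it is dispatched with a citation to Wolfstetter (1993)---so your attempt cannot be measured against an in-paper argument; I assess it on its own. Your general machinery is sound: the reduction of the integral criterion to the boundary value $G(+\infty)=\E[Y]-\E[X]$ plus a single-crossing condition on $F_X-F_Y$ is the standard route, and it carries the normal case (the noise-decomposition argument for sufficiency and the left-tail argument for necessity of the variance inequality are both correct), the uniform case (the piecewise-linear crossing analysis works, and $b_Y\le b_X+(a_X-a_Y)$ is indeed the mean inequality), and the discrete uniform case (testing $\E[(u-X)^+]\le\E[(u-Y)^+]$ at the atoms is legitimate since both sides are piecewise linear and convex in $u$, though the final ``collapses to'' step deserves a line of justification). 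Your side remark that the utility-function formulation must be restricted to \emph{nondecreasing} concave utilities is also correct and fixes a genuine imprecision in Definition~\ref{def:second_stoch_dom}.

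The genuine gap is exactly where you anticipated it: the gamma case. The step ``show the densities cross in the pattern that makes $F_X-F_Y$ change sign exactly once from below'' does not follow from the hypotheses $\theta_X\ge\theta_Y$ and $k_X\theta_X\ge k_Y\theta_Y$, because nothing constrains the shape parameters, and the behaviour of a gamma CDF near the origin is governed by the shape, not the scale. Take $X\sim\Gamma(1,10)$ and $Y\sim\Gamma(9,1)$: both stated conditions hold ($10\ge 9$ and $10\ge 1$), yet $F_X(u)\approx u/10$ while $F_Y(u)=O(u^9)$ near $0$, so $F_X>F_Y$ on a right neighbourhood of the origin and $\int_0^u(F_X-F_Y)\,dz>0$ there; equivalently, for the increasing concave utility $u(t)=\min(t,1)$ one gets $\E[u(X)]<\E[u(Y)]$. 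So the crossing pattern you need is false, and no completion of your sketch is possible: the stated pair of inequalities is the classical sufficient condition for the increasing \emph{convex} (stop-loss) order between gammas, not for SSD, which is the increasing \emph{concave} order. (The ``only if'' direction fails as well: $\Gamma(100,1)\succeq_{SSD}\Gamma(1,2)$ by your own single-crossing criterion, yet $\theta_X<\theta_Y$.) You should either restrict the gamma item to a common shape or a common scale, where a single-crossing argument does go through, or flag that the third bullet of the lemma cannot be proved as stated.
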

These conditions compare two distributions of a same type which differ only in their parameters. In general, two sufficiently different distributions can be often incomparable by SSD, e.g., if one distribution is normal with positive variance and the second is uniform on a bounded interval. Assuming players' preferences being modelled by SSD leads to our definition of the SSD-core.

\begin{defn}[SSD-core]
    Let $(N,v)$ be a stochastic TU-game. The \emph{SSD-core} is a set of efficient stochastic payoffs $x$ denoted by $\textbf{DC}(v)$ for which it holds that
    $$\forall S\subseteq N: x(S)\succeq_{SSD}v(S) \And x(N)\text{ has the same distribution as }v(N).$$
\end{defn}

To further restrict to a specific type of allocations, we denote $\textbf{DC}^{(d,r)}(v)$ and $\DCr$ the SSD-core consisting of only stochastic payoffs with or without transfer payments, respectively. 

In the rest of this section, we recall classical TU-games and state auxiliary definitions, which we use in the analysis of the SSD-core. Recall a \emph{classical TU-game} is $(N,v)$, where $N$ is the player set and $v \colon 2^N \to \R$ with $v(\emptyset) = 0$. Further the \emph{core} $\core(v)$ of a classical TU-game $(N,v)$ is $\core(v) = \{x \in \R^n \mid x(S) \geq v(S), \forall S \subseteq N \text{ and } x(N) = v(N)\}$ and the \emph{nonnegative cost core} $\core_{cost}(v) = \{x \in \R^n_+ \mid x(S) \leq v(S) \text{ and } x(N)= v(N)\}$. In accordance with standard notation, $x(S) = \sum_{i \in S}x_i$. A classical TU-game $(N,v)$ is \emph{superadditive}, if $v(S) + v(T) \leq v(S \cup T)$ for $S,T \subseteq N, S \cap T = \emptyset$ and it is \emph{convex}, if $v(S) + v(T) \leq v(S \cap T) + v(S \cup T)$ for $S,T \subseteq N$. For a convex $(N,v)$, $\core(v) \neq \emptyset$ and in general games with nonempty cores are called \emph{balanced}. A cooperative game $(N,v_r)$ for $r \in \R^n$ is \emph{additive} if $v_r(S) = \sum_{i \in S}r_i$.
In our results, we further employ special classical TU-games. For simplicity, we denote $\mu_S = \E[v(S)]$ and $\sigma_S = \sqrt{\textbf{Var}([v(S)])}$.

\begin{defn}[Mean and deviation games]\label{def:mean_std}
With a stochastic TU-game $(N,v)$, we associate the following games:
\begin{itemize}
        \item \emph{mean game} $(N,\mu)$ defined as $\mu(S)=\mu_S$,
        \item \emph{deviation game} $(N,\dev)$ defined as  $\dev(S)=\frac{\sigma_S}{\sigma_N}$.
    \end{itemize}
\end{defn}

\begin{defn}[Lower bound game]\label{def:lower_game}
    Let $(N,v)$ be a stochastic TU-game and $v(S)\sim U[a_S,b_S],\ \forall S\subseteq N$ be uniformly distributed.
    Then the \emph{lower bound game} $(N,a)$ is classical TU-game defined as $a(S)=a_S$.
\end{defn}

Similarly to $\mu(S)$, $\mu_S$, and $a(S)$, $a_S$, we use $v(S)$ and $v_S$ interchangeably even for other characteristic functions.

\section{Analysis of the SSD-core}\label{sec:3} 
In this section, we analyze the SSD-core, focusing on the question of nonemptiness. We provide a detailed demonstration of our findings on $\DCdr$, which consists of stochastic payoffs with transfer payments. We establish connections between SSD-cores and the cores of classical TU-games. First, in Proposition~\ref{prop:SSD_ex_dr}, we show that the nonemptiness of the core of the mean game is always necessary. For the case of the normal distribution, Theorem~\ref{thm:norm_dr} proves that the nonemptiness of the SSD-core is equivalent to the nonemptiness of both the core of the mean game and the nonnegative cost core of the deviation game. In contrast, our main result in Theorem~\ref{thm:unif_dr} demonstrates that for uniform distributions, while the nonemptiness of the cores of the mean and lower bound games is necessary, it is not sufficient. An additional condition—convexity of the lower bound game—is sufficient. Further insight into the limits of nonemptiness is provided in Proposition~\ref{prop:superad_convex}.

In the remainder of this section, we provide an overview of similar results achieved for stochastic payoffs of type $r$, along with a slight generalization of type $(d, r)$, where the restriction $r \geq 0$ is relaxed.

We start with a general observation about the relationship between the SSD-core and the core of the mean game.
\begin{proposition}\label{prop:SSD_ex_dr}
For a stochastic TU-game $(N,v)$, it holds 
\begin{equation*}\DCdr \neq \emptyset \implies \core(\mu) \neq \emptyset,
\end{equation*}
where $(N,\mu)$ is the mean game of $(N,v)$.     
\end{proposition}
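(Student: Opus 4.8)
The plan is to extract, from any member of the SSD-core, a vector of real numbers that lies in the core of the mean game; the natural candidate is the deterministic component $d$ of the stochastic payoff. I would begin by taking an arbitrary $x \in \DCdr$ and writing it in the form guaranteed by Definition~\ref{def:types_of_payoffs}, namely $x_i = d_i + r_i\,(v(N) - \E[v(N)])$ with $r_i \geq 0$, together with the efficiency conditions $d(N) = \mu_N$ and $r(N) = 1$. Summing over a coalition $S$ gives $x(S) = d(S) + r(S)\,(v(N) - \E[v(N)])$, so the random part of $x(S)$ has zero mean and hence $\E[x(S)] = d(S)$.

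Next I would invoke the defining property of the SSD-core, $x(S) \succeq_{SSD} v(S)$ for every $S \subseteq N$. Applying the equivalent characterization~\eqref{def:SSD} with the (weakly) concave utility $u(t) = t$, for which $u''\equiv 0 \leq 0$, yields $\E[x(S)] \geq \E[v(S)]$; that is, second order stochastic dominance always forces the expectations to be ordered. Combining this with $\E[x(S)] = d(S)$ gives $d(S) \geq \mu_S = \mu(S)$ for all $S$, while efficiency gives $d(N) = \mu_N = \mu(N)$. These are exactly the inequalities and the equality defining membership in $\core(\mu)$, so $d \in \core(\mu)$ and in particular $\core(\mu) \neq \emptyset$.

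The argument is short, and the only step requiring genuine care is the passage from stochastic dominance to the ordering of means; everything else is a direct computation once one notices that the transfer term $r_i\,(v(N)-\E[v(N)])$ is mean-zero by construction. It is worth emphasizing that the implication is intrinsically one-directional: the vector $r$ and the full distributional requirement that $x(N)$ have the same distribution as $v(N)$ are simply discarded when one projects onto the means, which already foreshadows why the converse need not hold in general, as discussed later in Section~\ref{sec:3}.
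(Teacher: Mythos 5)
Your proof is correct and follows essentially the same route as the paper: apply the SSD characterization with the identity (a concave utility) to get $\E[x(S)]\geq\E[v(S)]$, note that the mean-zero transfer term gives $\E[x(S)]=d(S)$, and conclude $d\in\core(\mu)$. Your version is slightly more explicit about efficiency supplying the equality $d(N)=\mu_N$, which the paper leaves implicit, but the argument is the same.
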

\begin{proof}
   If $\DCdr \neq \emptyset$, there exists a payoff vector $x$ such that $\forall S \subseteq N: x(S) \succeq v(S)$. Since the identity function $id \colon \R \to \R$ is concave, it follows from the definition of SSD that $\forall S \subseteq N: d(S) = id(\mathbb{E}[x(S)]) \geq id(\mathbb{E}[v(S)]) = \mu_S$. Because there exists a payoff vector $x$ satisfying $d(S) \geq \mu_S$ for all $S \subseteq N$, we conclude that $\core(\mu) \neq \emptyset$. 
\end{proof}

Proposition~\ref{prop:SSD_ex_dr} establishes a link between the SSD-core and the core of classical TU-games in general. In the case of normal distributions, the nonemptiness of the SSD-core is guaranteed by the additional condition of nonemptiness of the nonnegative cost core of another classical TU-game.

\begin{thm}[SSD-core under normal distribution]\label{thm:norm_dr}
    Let $(N,v)$ be a stochastic TU-game. Suppose $v(S)\sim N(\mu_S,\sigma_S^2),\ \forall S\subseteq N$ is normally distributed, where $\mu_S\in \R, \sigma_S^2\in \R_+$. Then
    \begin{equation*}
        \DCdr\neq \emptyset \iff \core(\mu)\neq \emptyset\text{ and } \core_{cost}(\dev)\neq \emptyset.
    \end{equation*}
    
\end{thm}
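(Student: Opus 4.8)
The plan is to reduce membership in $\DCdr$ to two independent systems of linear (in)equalities, one governing the vector $d$ and one governing the vector $r$, and then to recognize these systems as the defining conditions of $\core(\mu)$ and $\core_{cost}(\dev)$ respectively.

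First I would compute the distribution of each coalitional payoff. For a stochastic payoff with transfer payments of type $(d,r)$, summing over $S$ gives $x(S) = d(S) + r(S)\bigl(v(N)-\mu_N\bigr)$. Since $v(N)\sim N(\mu_N,\sigma_N^2)$, the variable $x(S)$ is an affine transformation of a normal variable, hence $x(S)\sim N\bigl(d(S),\,r(S)^2\sigma_N^2\bigr)$; here I use $r(S)\geq 0$ to identify its standard deviation as $r(S)\sigma_N$. Next I would apply the normal-distribution characterization of SSD from Lemma~\ref{lemma:ssd_conditions} to each pair $x(S),v(S)$: comparing $N\bigl(d(S),r(S)^2\sigma_N^2\bigr)$ with $N(\mu_S,\sigma_S^2)$ shows that $x(S)\succeq_{SSD}v(S)$ holds if and only if $d(S)\geq \mu_S$ and $r(S)\sigma_N\leq \sigma_S$, the latter being exactly $r(S)\leq \dev(S)$ after dividing by $\sigma_N>0$. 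The requirement that $x(N)$ have the same distribution as $v(N)$ translates, by matching means and variances of normals, into $d(N)=\mu_N$ and $r(N)=1$, which coincide with the efficiency conditions.

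Finally I would observe the decoupling. The constraints involving $d$ are precisely $d(S)\geq \mu_S$ for all $S$ together with $d(N)=\mu_N$, i.e. $d\in\core(\mu)$, while the constraints involving $r$ are $r\geq 0$, $r(S)\leq \dev(S)$ for all $S$, and $r(N)=\dev(N)=1$, i.e. $r\in\core_{cost}(\dev)$. Because $d$ and $r$ appear in disjoint constraint sets, a feasible pair $(d,r)$ exists if and only if each system is separately solvable, which gives the claimed equivalence. For the backward direction I would simply take $d\in\core(\mu)$ and $r\in\core_{cost}(\dev)$, set $x_i=d_i+r_i\bigl(v(N)-\mu_N\bigr)$, and verify using the same distributional computations that $x\in\DCdr$.

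I do not expect a genuine obstacle here: the normal family is closed under affine maps and SSD among normals reduces to a mean/variance comparison, which is exactly what makes the $d$- and $r$-conditions separate cleanly into two unrelated classical cores. The only points requiring care are the standing assumption $\sigma_N>0$ (so that $\dev$ is well-defined and dividing by $\sigma_N$ is legitimate) and the bookkeeping that $\dev(N)=1$, which ensures the efficiency normalization $r(N)=1$ matches the grand-coalition constraint built into $\core_{cost}(\dev)$.
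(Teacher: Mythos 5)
Your proposal is correct and follows essentially the same route as the paper's proof: express $x(S)\sim N(d(S),r(S)^2\sigma_N^2)$, invoke the mean/variance characterization of SSD for normals from Lemma~\ref{lemma:ssd_conditions}, and observe that the resulting constraints on $d$ and $r$ decouple into $d\in\core(\mu)$ and $r\in\core_{cost}(\dev)$. Your write-up is somewhat more explicit about the decoupling, the efficiency bookkeeping $\dev(N)=1$, and the backward direction, but there is no substantive difference in approach.
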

\begin{proof}
    For $x \in \DCdr$, we have $\E[x(S)]=d(S)$ and further $\var[x(S)]=(r(S))^2\sigma_N^2$.
    It follows from Lemma~\ref{lemma:ssd_conditions} that $\DCdr\neq \emptyset$ if and only if there is an efficient $(d,r)$ satisfying for every $S \subseteq N$:
    \begin{align}
    d(S)&\geq \mu_S, \label{Nssd_1}  \\ 
    \sigma^2_S&\geq \sigma^2_N (r(S))^2 \iff  \dfrac{\sigma_S}{\sigma_N}\geq r(S),  \ \text{if } \ \sigma^2_N>0. \label{Nssd_2}
    \end{align}
    Conditions~\eqref{Nssd_1} together with efficiency of $x$ is equivalent to $ \core(\mu)\neq \emptyset$ and Conditions~\eqref{Nssd_2} with efficiency of $x$ to $\core_{cost}(\dev)\neq \emptyset$.
\end{proof}    

In the case of the uniform distribution, the situation becomes more complex. The variance game is replaced by the lower bound game, which in turn requires to shift our focus from the nonnegative cost core of the lower bound game to its core. Moreover, the nonemptiness of both cores is no longer sufficient; an additional condition related to the lower bound game must be considered. We demonstrate that the convexity of the lower bound game provides a sufficient condition.

\begin{lemma}\label{lemma:eq_add_games}
    Let $(N,a)$ be a TU-game, $(N,v_r)$ where $r\in\R^n$ be an additive TU-game and $K\in \R$. Then $\core(a+K\cdot v_r)=\core(a)+K\cdot r$.
\end{lemma}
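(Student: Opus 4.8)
The plan is to prove the set equality $\core(a+K\cdot v_r)=\core(a)+K\cdot r$ by establishing both inclusions through the linear change of variables $y = x - K\cdot r$, exploiting the additivity of $v_r$. The crucial observation is that since $v_r(S) = \sum_{i\in S} r_i = r(S)$, the combined game has characteristic function $(a+K\cdot v_r)(S) = a(S) + K\cdot r(S)$. Hence every core inequality for $a+K\cdot v_r$ differs from the corresponding inequality for $a$ by exactly the term $K\cdot r(S) = \sum_{i\in S}(K r_i)$, which is the sum over the coalition $S$ of the coordinates of the shift vector $K\cdot r$. This linear structure is precisely what makes the core translate rather than deform.

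First I would take an arbitrary $x \in \core(a+K\cdot v_r)$ and define $y = x - K\cdot r$ coordinate-wise. For each $S \subseteq N$ I would compute $y(S) = x(S) - K\cdot r(S)$ and combine this with the core constraint $x(S) \geq a(S) + K\cdot r(S)$ to obtain $y(S) \geq a(S)$; applying the same computation to $N$ together with the efficiency equality $x(N) = a(N) + K\cdot r(N)$ yields $y(N) = a(N)$. This shows $y \in \core(a)$, and therefore $x = y + K\cdot r \in \core(a) + K\cdot r$, establishing the inclusion $\core(a+K\cdot v_r) \subseteq \core(a)+K\cdot r$.

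For the reverse inclusion I would run the argument backwards: take $x = y + K\cdot r$ with $y \in \core(a)$ and verify directly that $x(S) = y(S) + K\cdot r(S) \geq a(S) + K\cdot r(S) = (a+K\cdot v_r)(S)$ for every $S$, while $x(N) = y(N) + K\cdot r(N) = a(N) + K\cdot r(N) = (a+K\cdot v_r)(N)$, so $x$ satisfies all defining constraints of $\core(a+K\cdot v_r)$. Since both directions reduce to the same elementary algebraic identity, I do not anticipate any genuine obstacle; the only point requiring care is the bookkeeping that identifies the scalar-weighted additive game with a coordinate-wise shift of payoff vectors, namely checking that $(K\cdot v_r)(S) = \sum_{i\in S}(K r_i)$ so that subtracting the vector $K\cdot r$ cancels this contribution coalition by coalition.
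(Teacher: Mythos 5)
Your proof is correct and follows essentially the same route as the paper: both directions are handled by the coordinate-wise shift $y = x - K\cdot r$ together with the identity $(K\cdot v_r)(S) = K\cdot r(S)$. If anything, your version is slightly more careful than the paper's in explicitly verifying the efficiency equality $x(N) = (a + K\cdot v_r)(N)$ in both inclusions.
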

\begin{proof}
    First, note $\core(K\cdot v_r)=K\cdot r$. For $x \in \core(a)$, $x(S) \geq a(S)$, thus $x(S) + K \cdot r(S) \geq (a + K \cdot v_r)(S)$, therefore $x + K \cdot r \in \core(a + K \cdot v_r)$.
    Second, suppose $z\in \core(a+K\cdot v_r)$ and define $y$ as $y_i = z_i - K \cdot r_i$. As $z(S)\geq a(S) + K\cdot r(S)$ yields $y(S)\geq a(S)$, it holds $y \in \core(a)$, which concludes the proof.
\end{proof}

\begin{thm}[SSD-core under uniform distribution]\label{thm:unif_dr}
    Let $(N,v)$ be a stochastic TU-game. Suppose  $v(S)\sim U[a_S,b_S],\ \forall S\subseteq N$ is uniformly distributed, where $a_S,b_S\in \R$, $a_S<b_S$. Then the following implications hold:
    \begin{align*}
    &\DCdr\neq \emptyset \implies \core(\mu)\neq \emptyset \And \core(a)\neq \emptyset, \\
       &(N,a) \ \text{is a convex game } \And \core(\mu)\neq\emptyset\implies\DCdr\neq \emptyset,
    \end{align*}
    where $(N,\mu)$ is the mean game and $(N,a)$ is the lower bound game of $(N,v)$.
\end{thm}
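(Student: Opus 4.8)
The plan is to convert the defining conditions of $\DCdr$ into two deterministic linear systems and then recognise them as membership in $\core(\mu)$ and in a shifted copy of $\core(a)$. Writing $x_i = d_i + r_i\bigl(v(N)-\mu_N\bigr)$, each coalitional payoff is $x(S) = d(S) + r(S)\bigl(v(N)-\mu_N\bigr)$. Since $v(N)\sim U[a_N,b_N]$ and $r(S)\geq 0$, the variable $x(S)$ is uniform on $[\,d(S)-\ell\, r(S),\ d(S)+\ell\, r(S)\,]$ with $\ell := \mu_N-a_N = \tfrac{b_N-a_N}{2}>0$, and $\E[x(S)]=d(S)$. Applying the uniform criterion of Lemma~\ref{lemma:ssd_conditions} to $x(S)\succeq_{SSD}v(S)$, i.e. comparing lower endpoints and means, shows that $x\in\DCdr$ is equivalent to the existence of an efficient $(d,r)$ (so $d(N)=\mu_N$, $r(N)=1$, $r\geq 0$) with, for every $S\subseteq N$,
\begin{equation*}
 d(S)\geq \mu_S \qquad\text{and}\qquad d(S)-\ell\, r(S)\geq a_S .
\end{equation*}
The first family together with $d(N)=\mu_N$ says exactly $d\in\core(\mu)$; the second together with $d(N)-\ell\, r(N)=\mu_N-\ell=a_N$ says exactly $d-\ell r\in\core(a)$, which by Lemma~\ref{lemma:eq_add_games} is the statement $d\in\core(a+\ell\,v_r)=\core(a)+\ell r$.

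For the forward implication I would just read these two memberships off an arbitrary $x\in\DCdr$: the vector $d$ certifies $\core(\mu)\neq\emptyset$, while $z:=d-\ell r$ satisfies $z(S)=d(S)-\ell r(S)\geq a_S$ and $z(N)=\mu_N-\ell=a_N$, hence $z\in\core(a)$ and $\core(a)\neq\emptyset$.

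For the converse the same reformulation reduces the goal to a purely deterministic statement: it suffices to produce $d\in\core(\mu)$ and $z\in\core(a)$ with $z\leq d$ coordinatewise, for then $r:=(d-z)/\ell$ is nonnegative, automatically satisfies $r(N)=(\mu_N-a_N)/\ell=1$, and makes $(d,r)$ feasible for the system above. Since $\core(\mu)\neq\emptyset$ by hypothesis, I would fix any $d\in\core(\mu)$; note $d(S)\geq\mu_S\geq a_S$ for all $S$ (as $\mu_S=\tfrac{a_S+b_S}{2}\geq a_S$) and $d(N)=\mu_N> a_N$. The remaining task is thus to dominate some core element of $a$ from above by this fixed $d$.

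The crux — and the only place where convexity of $(N,a)$ enters — is the claim that for a convex game $a$ and any $d$ with $d(S)\geq a(S)$ for all $S$ and $d(N)\geq a(N)$ there is $z\in\core(a)$ with $z\leq d$. I would prove it by a reduction argument: start from $z:=d$ (feasible, since $z(S)\geq a(S)$) and repeatedly lower coordinates, each time by the largest amount that preserves $z(S)\geq a(S)$, until $z(N)=a(N)$. Convexity makes this always possible: if $z(S)=a(S)$ and $z(T)=a(T)$ then, combining additivity of $z$ with $a(S)+a(T)\leq a(S\cap T)+a(S\cup T)$, both $S\cap T$ and $S\cup T$ are tight as well, so the tight sets form a lattice with a unique maximal element $T^{\ast}$. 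Whenever $z(N)>a(N)$ we have $T^{\ast}\neq N$, so any coordinate $i\notin T^{\ast}$ lies in no tight set and can be strictly decreased; doing so either reaches $z(N)=a(N)$ or creates a new tight set containing $i$, which enlarges $T^{\ast}$. Since $T^{\ast}$ can grow at most $n$ times the process terminates at a $z\in\core(a)$ with $z\leq d$. I expect this domination step to be the main obstacle; once it is established, substituting the resulting $z$ into $r=(d-z)/\ell$ completes the construction and the proof.
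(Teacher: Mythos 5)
Your proposal is correct and follows essentially the same route as the paper: the same reduction of the SSD conditions to $d\in\core(\mu)$ and $d-(\mu_N-a_N)r\in\core(a)$, and the same greedy coordinate-lowering construction from a fixed $d\in\core(\mu)$, using convexity to show the tight coalitions are closed under union. Your bookkeeping via the maximal tight set $T^{\ast}$ is a slightly cleaner packaging of the paper's induction on $S_1\cup\dots\cup S_m$, but the underlying argument is identical.
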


\begin{proof}
A stochastic payoff $x\in \DCdr$ needs to satisfy $x(S)\succeq_{SSD} v(S)$, $\forall S\subseteq N$. To be able to apply Lemma~\ref{lemma:ssd_conditions}, we express $x(S)$ as $$x(S)\sim U[d(S)+r(S)(a_N-\mu_N),d(S)+r(S)(b_N-\mu_N)].$$
Applying the lemma, $x \in \DCdr$ if and only if it satisfies for every $S \subseteq N$:
\begin{align}
    &d(S)\geq \mu_S, \label{unif_1} \\ 
    &d(S)\geq a_S +r(S)(\mu_N-a_N). \label{unif_2}  
\end{align}
Conditions~\eqref{unif_1} and~\eqref{unif_2} can be viewed as a condition for $d$ to be in the intersection of two cores; namely inequalities in~\eqref{unif_1} translates to $d \in \core(\mu)$ and those in~\eqref{unif_2} to $d \in \core(a+(\mu_N-a_N)\cdot v_r)$, where $v_r(S) = r(S)$, or equivalently by Lemma~\ref{lemma:eq_add_games}, $d \in \core(a)+(\mu_N-a_N)r$.
Nonemptiness of $\DCdr$ is thus equivalent to existence of $d\in \core(\mu)\cap (\core(a)+(\mu_N-a_N)r)$, i.e., finding some $d \in \R^n$ in the intersection of $\core(\mu)$ and $\core(a)$ shifted by $(\mu_N-a_N)r $ for some $r \geq 0$: $r(N)=1$. It is immediate from Lemma~\ref{lemma:eq_add_games} that once either $\core(\mu) = \emptyset$ or $\core(a)= \emptyset$, we have $\DCdr = \emptyset$.

Now assume nonemptiness of both $\core(\mu)$, $\core(a)$ and convexity of $(N,a)$. We show that for any $d \in \core(\mu)$, we can find $x \in \core(a)$ and $r$ satisfying $r \geq 0$, $r(N)=1$ such that $d \in \core(\mu) \cap\left( \core(a) + (\mu_N - a_N)r\right)$. For this purpose, we employ the following iterative process $\mathcal{P}$:
\begin{enumerate}
    \item Set $x^0=d$. 
    \item At step $m$, select player $k_m$ defined as: 
    $$ k_m=\min\{k\in N: a_S<x^{m-1}(S), \ \forall S\subseteq N,\ k\in S\}.$$
    %Player $k_m$ is only included in the coalitions where $x^{m-1}(S)$ exceeds the bound $a_S$. 
    \item Set $x^m=x^{m-1}- t_m e_{k_m}$, where $e_{k_i}$ is $k_i$-th vector of canonical basis in $\R^n$ and $t_m$ is a length of step $m$ given by 
    $$t_m=\min_{S\subseteq N, S\ni k_m}[x^{m-1}(S)-a_S].$$
    We denote $\mathcal{S}_m=\arg \min_{S\subseteq N, S\ni k_m} [x^{m-1}(S)-a_S].$
    
    \item If $x^{m}(N)\neq a_N$ then go to step $m+1$. If $x^m(N)=a_N$, set $x = x_m$ and stop the process. We denote the final step as $m_{end}.$ 
\end{enumerate}
 Notice that at each step $m$, $x^m(S) \geq a(S)$ for every $S \subsetneq N$. For $x^{m_{end}} \in \core(a)$, it remains to verify that $x^{m_{end}}(N)=a_N$. A key observation is that $\mathcal{S}_m$ is closed under union, or equivalently, that $\bigcup_{S\in \mathcal{S}_m}S$ is equal to the inclusion-wise maximal element in $\mathcal{S}_m$. To see this, suppose $S,\ T\in \mathcal{S}_m$, therefore, $a_S=x^m(S)$ and $a_T=x^m(T)$. From convexity of $(N,a)$, it follows that $a_S+a_T\leq a_{S\cup T}+a_{S\cap T}$. In combination with $a_{S\cup T}\leq x(S\cup T)$ and $ a_{S\cap T}\leq x(S \cap T)$, it follows that $a_{S\cup T}= x(S\cup T)$, thus $S \cup T \in \mathcal{S}_m$. We denote by $S_m$ the inclusion-wise maximal set in $\mathcal{S}_m$.
 
 Let us follow the process $\mathcal{P}$ to verify $x^{m_{end}}(N) = a(N)$ by showing for every $m$ that $x^m(S_1 \cup \dots S_m) = a(S_1 \cup \dots \cup S_m)$.
\begin{itemize}
    \item At step $m=1$: From the definition of $\mathcal{P}$, $x^1(S_1)=a_{S_1}$ is immediate.
    \item At step $m=2$: From $x^2(S_2)=a_{S_2}$ and $x^2(S_1)=x^1(S_1) = a_{S_1}$, and convexity of $(N,a)$, we have
    $$x^2(S_1 \cup S_2)+x^2(S_1\cap S_2)=a_{S_1}+a_{S_2} \leq a_{S_1\cup S_2}+a_{S_1 \cap S_2}$$ 
    and since $a_S \leq x^2(S)$, $\forall S\subseteq N$, we have  
    $$x^2(S_1 \cup S_2)+x^2(S_1\cap S_2)=a_{S_1\cup S_2}+a_{S_1 \cap S_2}.$$ 
    Specifically $x^2(S_1\cap S_2)=a_{S_1 \cap S_2}$ and $x^2(S_1 \cup S_2)=a_{S_1\cup S_2}$.
    
    \item At step $m>2$: The general step follows the argument of $m=2$. We have $x^m(S_m)=a_{S_m}$ and $x^{m-1}(S_1 \cup \dots \cup S_{m-1}) = a_{S_1 \cup \dots \cup S_{m-1}}$.
     Therefore
        $$x^m(S_1 \cup\ldots \cup S_{m} )+x^m((S_1 \cup \ldots \cup S_{m-1})\cap S_m)=a_{(S_1 \cup \ldots \cup S_{m-1})}+ a_{S_{m}}, $$
        and at the same time from the convexity of $(N,a)$, 
      $$a_{(S_1 \cup \ldots \cup S_{m-1})}+ a_{S_{m}} \leq a_{((S_1 \cup \ldots S_{m-1}) \cap S_{m})}+a_{(S_1 \cup \ldots \cup S_{m})}.$$
    Similar to $m=2$, we get $x^m(S_1 \cup \ldots \cup S_{m} )=a_{(S_1 \cup \ldots \cup S_{m})}.$
    \item It remains to show that $S_1\cup \dots \cup S_{m_{end}} = N$. If not, then there is $i\in N\setminus \bigcup_{i=1}^{m_{end}}S_i$ for which it holds that every coalition $S$, $i\in S$ satisfies $x^{m_{end}}(S)\geq \mu_S>a_S$ since the distribution of $v(S)$ is not degenerated, i.e., $a_S<b_S$, thus $m_{end}$ is not the final step. 
    
\end{itemize}
    
\end{proof}
The characterization of $\DCdr\neq \emptyset$ in a form of conditions from Theorem~\ref{thm:unif_dr} falls somewhere between balancedness and convexity of $(N,a)$. In what follows, we offer further insight into the extent of this gap. Example~\ref{example:super-example} shows that the superadditivity of a balanced $(N, a)$ is neither sufficient nor prohibitive for the nonemptiness of the SSD-core.

\begin{example}\label{example:super-example}
    Let $(N,a)$ be defined as $a_{12} = a_{23} = a_N = 3$ and $a_S = 0$, otherwise. This game is superadditive, however, the convexity is violated for $a_{12}+a_{23}>a_{123}+a_2$. Now let $(N,\mu)$ be defined as $\mu_1=\mu_3=5$, $\mu_2=2$, $\mu_{N}=12$, and $\mu_S=a_S$, otherwise.
    It holds $\core(a)=\{(0,3,0)\}$ and $\core(\mu)=\{(5,2,5)\}$. However, there is no $d \in \R^n$ and $r \in \R^n_+$ satisfying $r(N) = 1$ and $d\in \core(\mu)\cap (\core(a)+r(\mu_N-a_N))$, therefore $\DCdr = \emptyset$. However, if we change $\mu_2=5$, and $\mu_N=15$, we have $\core(\mu) = \{(5,5,5)\}$ thus $d = (5,5,5)$ and $r=\frac{1}{12}(5,2,5)$ satisfy the condition.
\end{example}

Example~\ref{example:super-example} indicates that the characterization does not necessarily depend just on properties of $(N,a)$ but rather on a relation between $(N,a)$ and $(N,\mu)$. It is thus much more interesting that simply convexity of $(N,a)$ is sufficient in combination with balancedness of $(N,\mu)$. The relation is not symmetric though, as it is easy to find an example where balancedness of $(N,a)$ and convexity of $(N,\mu)$ is not sufficient.

Understanding the importance of convexity of $(N,a)$ might be viewed from the point of view of the \emph{Weber set}, (\cite{Weber1988}). This solution concept, denoted by $W$, satisfies for every game that $\core(v) \subseteq W(v)$ and $\core(v) = W(v)$ if and only if $(N,v)$ is convex. In this sense, convex games are characterized by having the largest possible core, as the core coincides with the entire Weber set. The Weber set is a convex set, which is tightly bounded by hyperplanes $\mathcal{H}_i = \{x \in \R^n \mid x_i = v_i\}$ for $i \in N$. The following proposition illustrates that if a game is not convex, it is enough for just one player $i \in N$ to have the intersection of their hyperplane $\mathcal{H}_i$ with the Weber set $W(v)$ lying outside the core ($\mathcal{H}_i \cap W(v) \notin \core(v)$) to violate sufficiency. In other words, even if the Weber set closely aligns with the core, a small portion of its boundary missing -- particularly at the hyperplane of a single player -- is enough to break the condition for nonemptiness of $\DCdr$. This highlights how even a slightly weaker property of $(N,a)$ than convexity might not be enough.

\begin{proposition}\label{prop:superad_convex}
    Let $(N,a)$ be a balanced classical TU-game satisfying
    \begin{equation*}
        \exists i \in N,\ \forall x \in \core(a): x_i > a_i.
    \end{equation*}
    Then there are $b_S \in \R,\ \forall S \subseteq N$ such that $\DCdr = \emptyset$ for $(N,v)$ defined as $v(S) \sim U[a_S,b_S]$.
\end{proposition}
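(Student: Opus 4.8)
The plan is to construct a witness distribution, i.e., choose the upper bounds $b_S$ (equivalently the means $\mu_S$) so that the necessary-and-sufficient condition extracted in the proof of Theorem~\ref{thm:unif_dr} fails. Recall that $\DCdr \neq \emptyset$ is equivalent to the existence of $d \in \R^n$ and $r \in \R^n_+$ with $r(N)=1$ such that $d \in \core(\mu) \cap \bigl(\core(a) + (\mu_N - a_N)r\bigr)$. Since $(N,a)$ is already assumed balanced, the whole construction amounts to selecting a mean game $(N,\mu)$ with $\core(\mu) \neq \emptyset$ for which this intersection is nonetheless empty, and then realizing $(N,\mu)$ and $(N,a)$ simultaneously by a uniform distribution (which is always possible by taking $b_S = 2\mu_S - a_S$, so that the midpoint of $[a_S,b_S]$ is $\mu_S$ and $a_S < b_S$ provided $\mu_S > a_S$).

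First I would exploit the hypothesis. The assumption says player $i$ is never tight on the lower bound game: every $x \in \core(a)$ has $x_i > a_i$, so there is a gap $\delta = \min_{x \in \core(a)}(x_i - a_i) > 0$ (the minimum is attained since $\core(a)$ is a compact polytope). The idea is to force the mean game to demand that, in any admissible $d$, coordinate $i$ be pinned down to a value that the shifted lower-bound core cannot reach. Concretely, I would design $(N,\mu)$ so that $\core(\mu)$ is a single point (or a very thin set) whose $i$-th coordinate equals exactly $a_i$ plus a controlled amount, while simultaneously every element of $\core(a) + (\mu_N-a_N)r$ has its $i$-th coordinate strictly larger than that value for all feasible $r$. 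Because $r_i \geq 0$ only pushes the $i$-th coordinate of the shifted core upward by $(\mu_N - a_N)r_i \geq 0$, and the unshifted core already satisfies $x_i \geq a_i + \delta$, the $i$-th coordinate of any point in $\core(a)+(\mu_N-a_N)r$ is at least $a_i + \delta$; choosing $\mu$ so that the forced value $d_i$ in $\core(\mu)$ is strictly below $a_i + \delta$ produces the desired contradiction, emptying the intersection.

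The key steps in order are: (1) extract the gap $\delta > 0$ from the hypothesis using compactness of $\core(a)$; (2) define $\mu(S)$ for $S \neq N$ and $\mu(N)$ so that $\core(\mu)$ is nonempty but pins the $i$-th coordinate to $d_i < a_i + \delta$ — this can be arranged by making the mean game essentially additive off $i$ with a small margin, forcing $d_i$ low while keeping all other coalitional constraints slack; (3) verify that $\mu_S > a_S$ for every $S$ so that $b_S = 2\mu_S - a_S > a_S$ gives a valid nondegenerate uniform distribution realizing both games; (4) observe that for every $r \geq 0$ with $r(N)=1$ the $i$-th coordinate of any point of $\core(a)+(\mu_N-a_N)r$ exceeds $d_i$, so no $d$ can lie in both cores, whence $\DCdr = \emptyset$ by Theorem~\ref{thm:unif_dr}.

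The main obstacle I expect is step~(2): arranging a mean game whose core is nonempty yet rigidly constrains a single coordinate from above without accidentally also constraining the shift. One must ensure that the pinning of $d_i$ comes from the mean-game constraints (so $\core(\mu)\neq\emptyset$ is preserved and the bound is genuinely forced on every admissible $d$), while the freedom in $r$ cannot compensate because $r_i \geq 0$ only moves the relevant coordinate the wrong way. The delicate point is quantitative: the shift magnitude $(\mu_N - a_N)$ grows as we enlarge the means, so $\mu$ must be tuned so that $d_i$ stays below $a_i+\delta$ even after accounting for how $\mu_N$ enters the shift — I would handle this by keeping $\mu_S = a_S$ for all coalitions not containing $i$ (forcing $d_j = a_j$ there and squeezing $d_i$ via efficiency and the grand-coalition value), which decouples the construction and makes the final inequality transparent.
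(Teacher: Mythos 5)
Your proposal is correct and follows essentially the same route as the paper: pin the $i$-th coordinate of $\core(\mu)$ just above $a_i$ (below the gap $\delta$ guaranteed by the hypothesis and compactness of $\core(a)$), observe that the nonnegative shift $(\mu_N-a_N)r_i$ can only push the $i$-th coordinate of the shifted lower-bound core further above $a_i+\delta$, and realize the means by nondegenerate uniform distributions. The only wrinkle is your later suggestion to set $\mu_S=a_S$ for coalitions not containing $i$, which would make those $v(S)$ degenerate; as you already note in your step (3), one needs $\mu_S>a_S$ throughout, which a small perturbation restores.
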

\begin{proof}
    We construct a mean game $(N, \mu)$ by setting $\mu_i = a_i + \varepsilon$, where $\varepsilon > 0$ is sufficiently small to ensure that $\mu_i < x_i$ for every $x \in \core(a)$. Further, setting $\mu_{N \setminus i }=\mu_N-\mu_i$ ensures for every $y \in \core(\mu): y_i = \mu_i$ and finally, by setting $\mu_S > a_S$ for the rest of coalitions together with Theorem~\ref{thm:unif_dr} implies $\DCdr = \emptyset$.
\end{proof}    

\subsection{Overview of results for other types of payoffs}
The analysis of stochastic payoffs with transfer payments can be generalized to $r \in \mathbb{R}^n$, lifting the condition of nonnegativity. This idea was proposed in~\cite{sun2022optimization} and effectively allows for negative correlation between $v(N)$ and the payoff $x_i$, i.e., $\text{corr}(x_i, v(N)) = -1$. An agent may prefer a negative $r_i$ if they are pessimistic about the outcome and wish to \emph{distance} themselves from a negative result (realization is less than the expected value), or more specifically, profit from it, even at the cost of suffering penalties if a positive outcome occurs. We present results regarding the uniform and normal distribution of these allocations types, which we denote by $(d,r_\pm)$, to compare with the results in Theorems~\ref{thm:norm_dr} and~\ref{thm:unif_dr}. As the proofs differ only in minor details, we omit them.
\begin{proposition}
    Let $(N,v)$ be a stochastic TU-game, where $v(S)\sim N(\mu_S,\sigma_S^2)$, $\forall S \subseteq N$ is normally distributed with parameters $\mu_S\in \R, \sigma_S^2\in \R_+$. It holds
    $\DCdrpm\neq \emptyset$ if and only if there are $d \in \R^n$, $r \in \R^n$:
    \begin{itemize}
        \item $d(S)\geq \mu_S,\ \forall S\subseteq N$ and $d(N)=\mu_N,$
        \item $|r(S)|\leq \frac{\sigma_S}{\sigma_N},\ \forall S\subseteq N$ and $r(N)=1.$
    \end{itemize}
\end{proposition}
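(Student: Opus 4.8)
The plan is to mirror the proof of Theorem~\ref{thm:norm_dr}, tracking the single place where relaxing $r \geq 0$ to $r \in \R^n$ alters the argument. First I would compute the distribution of the coalitional payoff. For an allocation of type $(d,r_\pm)$ the payoff is $x_i = d_i + r_i(v(N)-\mu_N)$, hence $x(S) = d(S) + r(S)(v(N)-\mu_N)$. Since $v(N)\sim N(\mu_N,\sigma_N^2)$, the centred variable $v(N)-\mu_N$ is $N(0,\sigma_N^2)$, and scaling it by the real number $r(S)$ yields $r(S)(v(N)-\mu_N)\sim N\bigl(0,(r(S))^2\sigma_N^2\bigr)$ irrespective of the sign of $r(S)$. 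Therefore $x(S)\sim N\bigl(d(S),(r(S))^2\sigma_N^2\bigr)$.

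Next I would invoke the normal case of Lemma~\ref{lemma:ssd_conditions}: for normally distributed $X,Y$, the relation $X\succeq_{SSD}Y$ holds exactly when the mean of $X$ dominates that of $Y$ and the variance of $X$ is no larger. Applied to $x(S)\succeq_{SSD}v(S)$, this gives the two requirements $d(S)\geq \mu_S$ and $(r(S))^2\sigma_N^2\leq \sigma_S^2$ for every $S\subseteq N$. In the non-degenerate case $\sigma_N>0$, the variance inequality is equivalent to $|r(S)|\leq \sigma_S/\sigma_N$. This is the only step that differs from Theorem~\ref{thm:norm_dr}: because the normal law is symmetric, its variance depends on $r(S)$ solely through $(r(S))^2$, so a negative $r(S)$ is admissible precisely when its magnitude obeys the same bound, and the one-sided constraint $r(S)\leq \sigma_S/\sigma_N$ is replaced by the two-sided $|r(S)|\leq \sigma_S/\sigma_N$.

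Finally I would dispose of the efficiency and same-distribution requirements. Efficiency forces $d(N)=\mu_N$ and $r(N)=1$; together these give $x(N)\sim N(\mu_N,\sigma_N^2)$, which is exactly the law of $v(N)$, so the condition that $x(N)$ share the distribution of $v(N)$ is automatic rather than an extra constraint. Collecting the coalitional inequalities with these two grand-coalition equalities yields precisely the stated system for $(d,r)$, and $\DCdrpm\neq\emptyset$ is by construction equivalent to its solvability.

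I do not expect a genuine obstacle here; the content is a routine re-run of Theorem~\ref{thm:norm_dr}, which is consistent with the paper's remark that the proofs differ only in minor details. The one point worth stating explicitly is the symmetry observation that converts the nonnegativity-driven bound $r(S)\leq \sigma_S/\sigma_N$ into the absolute-value bound $|r(S)|\leq \sigma_S/\sigma_N$; every other line of the argument is unchanged.
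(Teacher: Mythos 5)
Your proposal is correct and follows exactly the route the paper intends: it re-runs the proof of Theorem~\ref{thm:norm_dr}, replacing the one-sided bound $r(S)\leq \sigma_S/\sigma_N$ by the two-sided $|r(S)|\leq \sigma_S/\sigma_N$ because the variance of $x(S)$ depends on $r(S)$ only through $(r(S))^2$. The paper omits this proof precisely on the grounds that it differs from Theorem~\ref{thm:norm_dr} only in this minor detail, which you have identified and justified correctly.
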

Unfortunately, conditions of the second type do not immediately translate to nonemptiness of any core of $(N,\dev)$ due to the absolute value. The result concerning uniform distributions (Proposition~\ref{prop:unif_general_type}) is more interesting because the characterization of $\DCdrpm\neq\emptyset$ simplifies to $\core(a)\neq\emptyset$ and $\core(\mu)\neq\emptyset$. The reason for the simplification is quite straightforward; for any $d \in \core(\mu)$ and any $x \in \core(a)$, one can construct $r \in \R^n$ as $r = d-x$ as the entries of $r$ are allowed to be negative.
\begin{proposition}\label{prop:unif_general_type}
    Let $(N,v)$ be a stochastic TU-game, where $v(S)\sim U[a_S,b_S]$,  $\forall S\subseteq N$ is uniformly distributed with $a_S,b_S\in \R$, $a_S<b_S$. Then \begin{equation*}
        \DCdrpm\neq \emptyset \iff \core(a)\neq \emptyset\ \& \ \core(\mu) \neq \emptyset.
    \end{equation*}
\end{proposition}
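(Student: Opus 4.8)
The plan is to mirror the proof of Theorem~\ref{thm:unif_dr}, reducing membership in $\DCdrpm$ to linear conditions via Lemma~\ref{lemma:ssd_conditions}. Writing $x(S)=d(S)+r(S)(v(N)-\mu_N)$ and noting that $v(N)-\mu_N\sim U[-\delta,\delta]$ with $\delta:=\mu_N-a_N=b_N-\mu_N$, the variable $x(S)$ is uniform with mean $d(S)$ and lower endpoint $d(S)-|r(S)|\delta$; the absolute value is the essential new feature, since $r(S)$ may now be negative. Lemma~\ref{lemma:ssd_conditions} then gives $x(S)\succeq_{SSD}v(S)$ iff $d(S)\geq\mu_S$ and $|r(S)|\leq(d(S)-a_S)/\delta$, which together with efficiency $d(N)=\mu_N$, $r(N)=1$ is the full system to be solved. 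The first family of conditions is exactly $d\in\core(\mu)$.

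For necessity I would argue as in Proposition~\ref{prop:SSD_ex_dr}: from $x\in\DCdrpm$ one reads off $d\in\core(\mu)$ directly, so $\core(\mu)\neq\emptyset$. For $\core(a)$ I would exhibit the explicit point $c:=d-\delta r$, which satisfies $c(N)=\mu_N-\delta=a_N$ and $c(S)=d(S)-\delta r(S)\geq d(S)-\delta|r(S)|\geq a_S$ by the lower-endpoint condition, so $c\in\core(a)$. This direction is clean and uses only $r(S)\leq|r(S)|$.

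For sufficiency the natural construction, given $d\in\core(\mu)$ and $c\in\core(a)$, is to invert the above and set $r:=(d-c)/\delta$, so that $r(N)=(\mu_N-a_N)/\delta=1$ and the mean condition holds. The bound $r(S)\leq(d(S)-a_S)/\delta$ is then literally equivalent to $c(S)\geq a_S$ and holds for free. The main obstacle is the companion bound $-r(S)\leq(d(S)-a_S)/\delta$, i.e.\ $c(S)\leq 2d(S)-a_S$: it binds only on coalitions with $r(S)<0$, but an arbitrary pair $(d,c)$ need not satisfy it, so the real content is to show that $d$ and $c$ can always be chosen compatibly.

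This is exactly where I would be cautious. The slack $2d(S)-a_S\geq 2\mu_S-a_S=b_S$ makes the companion bound look harmless, but it couples the coalitions: when $(N,a)$ is non-convex its core can pin $c(S)$ strictly above $a_S$ on some coalition while $(N,\mu)$ pins $d$ to a single point, and then the constraints $|r(S)|\leq(d(S)-a_S)/\delta$ across nested coalitions may admit no common solution -- precisely the non-convexity obstruction already behind Theorem~\ref{thm:unif_dr} and Proposition~\ref{prop:superad_convex}. Before trying to close the argument I would therefore stress-test sufficiency on small non-convex lower bound games with a forced (singleton) core, for instance a three-player $(N,a)$ with $\core(a)=\{(1,0,0)\}$ against an additive mean game. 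I expect such a test to reveal that nonemptiness of both cores is \emph{not} sufficient, so that the equivalence as stated would require convexity of $(N,a)$ (as in Theorem~\ref{thm:unif_dr}) rather than mere balancedness; verifying or refuting this is the step on which the whole proposition turns.
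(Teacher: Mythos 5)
Your setup and your skepticism are both correct, and the skepticism is the important part: the equivalence as stated fails in the sufficiency direction, and your proposed stress test does produce a counterexample. Your reduction of $x(S)\succeq_{SSD}v(S)$ to $d(S)\geq\mu_S$ and $|r(S)|\,\delta\leq d(S)-a_S$ (with $\delta=\mu_N-a_N=b_N-\mu_N$) is the right system, and your necessity argument via $c=d-\delta r\in\core(a)$ is complete. The paper omits the proof and justifies sufficiency only by the remark that one may take $r=(d-x)/\delta$ for $d\in\core(\mu)$, $x\in\core(a)$; but that only checks the one-sided condition $d(S)\geq a_S+r(S)\delta$ carried over from the proof of Theorem~\ref{thm:unif_dr}, which identifies the lower endpoint of $x(S)$ correctly only when $r(S)\geq 0$. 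For $r(S)<0$ the lower endpoint is $d(S)-|r(S)|(b_N-\mu_N)$, which is exactly your companion bound $x(S)\leq 2d(S)-a_S$, and nothing forces a core element of $(N,a)$ to satisfy it.

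Concretely, take $N=\{1,2,3\}$ with $a_1=a_2=a_3=a_{23}=0$ and $a_{12}=a_{13}=a_N=1$, so that $\core(a)=\{(1,0,0)\}$ is nonempty but non-convexity pins $c_1=1>a_1$. Choose $b_1=\tfrac12$, $b_2=b_3=2$ and $b_S=\sum_{i\in S}b_i-a_S$ for $|S|\geq 2$ (all $b_S>a_S$), so that $(N,\mu)$ is additive with $\core(\mu)=\{(\tfrac14,1,1)\}$ and $\delta=\mu_N-a_N=\tfrac54$. Then $d=(\tfrac14,1,1)$ is forced, and the constraints give $|r_1|\leq(d_1-a_1)/\delta=\tfrac15$ while $r_1+1=r(\{1,2\})+r(\{1,3\})\leq\bigl((d_{12}-1)+(d_{13}-1)\bigr)/\delta=\tfrac25$, i.e.\ $r_1\leq-\tfrac35$, a contradiction. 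Hence $\DCdrpm=\emptyset$ although both $\core(a)$ and $\core(\mu)$ are nonempty, so no proof of the stated equivalence can exist; only the implication $\DCdrpm\neq\emptyset\implies\core(a)\neq\emptyset\ \&\ \core(\mu)\neq\emptyset$ survives, together with the sufficiency of convexity of $(N,a)$ plus $\core(\mu)\neq\emptyset$, which follows already from Theorem~\ref{thm:unif_dr} since $\DCdr\subseteq\DCdrpm$. Your instinct to test small non-convex lower bound games with singleton cores before attempting to close the argument was exactly the right move.
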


We conclude this section with an overview of results for stochastic payoffs without transfer payments. This type of allocations is more restrictive then those with transfer payments as each $x$ expressed by $r$ can be expressed by $(d,r)$ where $d_i=r_i \cdot \E[v(N)]$. On the other hand, these payoffs yield simpler analysis and can accompany more distributions since we need only scale family of distributions, i.e., distributions not changing under multiplication by a constant. This is reflected in Theorem~\ref{proposition:r_type_all_distr}, where we provide additional results for gamma and discrete uniform distributions. We will also see in the following section that, in the application to the multiple newsvendor problem, they provide a more easily explainable interpretation compared to payoffs with transfer payments, where the interpretation might not be as clear.
 
\begin{thm}[SSD-core conditions]\label{proposition:r_type_all_distr}
Let $(N,v)$ be a stochastic TU-game. 
An efficient stochastic vector without transfer payments $x = r \cdot v(N)$ lies in the SSD-core $\DCr$ if and only if the following conditions are met:

\begin{itemize}
    \item If $v(S)\sim N(\mu_S,\sigma_S^2),\ \forall S\subseteq N$ is normally distributed with parameters $\mu_S\in \R, \sigma_S^2\in \R_+$ then
    $$r(S)\geq \frac{\mu_S}{\mu_N} \And r(S)\leq \frac{\sigma_S}{\sigma_N}, \ \forall S\subseteq N. $$
    \item If $v(S)\sim U[a_S,b_S], \ \forall S\subseteq N  $ is uniformly distributed with $a_S,b_S\in \R$, $a_S<b_S$ then 
    $$r(S)\geq \max\{\frac{\mu_S}{\mu_N},\frac{a_S}{a_N}\}, \ \forall S\subseteq N. $$
    \item If $v(S)$ has a discrete uniform distribution with equiprobable realizations $\omega_1\leq \omega_2\leq\ldots\leq \omega_T$ $,\ \forall S\subseteq N$, then 
    $$\forall S\subset N: r(S)\geq \max_{k\in \{1,\ldots, T\}} \dfrac{\sum_{i=1}^k v(S,\omega_i)}{\sum_{i=1}^k v(N,\omega_i)}, \ \forall S\subseteq N.$$
    \item If $v(S)\sim \Gamma(k_S,\theta_S),\ \forall S\subseteq N$ has a gamma distribution with $k_S,\theta_S\in (0,\infty)$ begin the shape and scale parameters.
    $$r(S)\geq \dfrac{k_S\cdot \theta_S}{k_N\cdot \theta_N} \And r(S)\geq \dfrac{\theta_S}{\theta_N}.$$
\end{itemize}
\end{thm}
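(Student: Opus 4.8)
The plan is to treat each distributional case by invoking the appropriate characterization from Lemma~\ref{lemma:ssd_conditions} applied to the random variables $x(S) = r(S)\cdot v(N)$ and $v(S)$. The crucial structural observation is that for a stochastic payoff without transfer payments, $x(S) = r(S)\cdot v(N)$, so the distribution of $x(S)$ is simply a scaled version of the distribution of $v(N)$. Hence, whenever $v(N)$ and $v(S)$ belong to the same scale family, $x(S)$ and $v(S)$ are comparable by the relevant condition in Lemma~\ref{lemma:ssd_conditions}. The efficiency condition $r(N)=1$ guarantees $x(N) = v(N)$, so the distributional-equality requirement at $S=N$ is automatic; the content of the theorem is the inequality $x(S) \succeq_{SSD} v(S)$ for proper coalitions $S$.

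I would carry out the four cases in turn. For the \textbf{normal} case, $x(S) = r(S)v(N) \sim N\!\left(r(S)\mu_N,\, r(S)^2\sigma_N^2\right)$; applying the normal criterion $\mu_{x(S)}\geq\mu_{v(S)}$ and $\sigma_{x(S)}^2 \leq \sigma_{v(S)}^2$ yields $r(S)\mu_N \geq \mu_S$ and $r(S)^2\sigma_N^2 \leq \sigma_S^2$, which rearrange (using $r(S)\geq 0$) to the stated $r(S)\geq \mu_S/\mu_N$ and $r(S)\leq \sigma_S/\sigma_N$. For the \textbf{gamma} case, scaling a $\Gamma(k_N,\theta_N)$ variable by $r(S)$ gives $\Gamma(k_N, r(S)\theta_N)$ since the shape is scale-invariant; the gamma criterion $k_{x(S)}\theta_{x(S)} \geq k_S\theta_S$ and $\theta_{x(S)}\geq\theta_S$ then gives $r(S)k_N\theta_N \geq k_S\theta_S$ and $r(S)\theta_N\geq\theta_S$, i.e.\ the two stated bounds. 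For the \textbf{discrete uniform} case, the realizations of $x(S)$ are $r(S)\cdot v(N,\omega_i)$, and applying the partial-sum condition $\sum_{i=1}^k x(S)$-realizations $\geq \sum_{i=1}^k v(S)$-realizations for all $k$ (one must take care that scaling by the nonnegative $r(S)$ preserves the ordering of the realizations, so the sorted order is unchanged) yields exactly the stated maximum-over-$k$ lower bound on $r(S)$.

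For the \textbf{uniform} case, $x(S)=r(S)v(N)\sim U[r(S)a_N,\, r(S)b_N]$ when $r(S)\geq 0$. The uniform criterion in Lemma~\ref{lemma:ssd_conditions} demands both $a_{x(S)}\geq a_S$ and $\E[x(S)]\geq\E[v(S)]$, which become $r(S)a_N\geq a_S$ and $r(S)\mu_N\geq\mu_S$; combining these gives $r(S)\geq\max\{\mu_S/\mu_N,\, a_S/a_N\}$ as claimed. Here I would be careful about signs: the form $x(S)\sim U[r(S)a_N, r(S)b_N]$ relies on $r(S)\geq 0$ (so the interval endpoints are not swapped) and on $a_N$ having the expected sign relative to $a_S$; since payoffs without transfer payments impose $r_i\geq 0$, this is fine. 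The main obstacle in all four cases is essentially bookkeeping about \emph{scaling} rather than any deep argument: one must verify in each family that multiplication by the constant $r(S)\geq 0$ keeps the distribution within the same family and transforms the parameters linearly in the way the criteria require, and that the nonnegativity of $r(S)$ (inherited from $r_i\geq 0$) is what makes the scaled interval/ordering well-behaved. No single step is hard, but the uniform and discrete-uniform cases require the most attention to the interaction between scaling and the ordering of endpoints or realizations.
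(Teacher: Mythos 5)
Your proposal is correct and follows essentially the same route as the paper: express $x(S)=r(S)\cdot v(N)$ as a scaled member of the same scale family and apply the case-by-case SSD characterizations of Lemma~\ref{lemma:ssd_conditions}, with the normal and uniform cases mirroring the computations of Theorems~\ref{thm:norm_dr} and~\ref{thm:unif_dr}. In fact your gamma case, $x(S)\sim\Gamma(k_N,r(S)\theta_N)$, states the scaling more accurately than the paper's own line (which writes $\Gamma(k_S,r(S)\theta_S)$), and your caveats about the signs of $\mu_N$, $a_N$ and the denominators match the paper's closing remark on zero denominators.
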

\begin{proof}
    The proofs for the normal and uniform distributions are similar to those of Theorems~\ref{thm:norm_dr} and~\ref{thm:unif_dr}, thus we omit them. For the discrete uniform distribution, the criteria for SSD are based on partial sums from Lemma~\ref{lemma:ssd_conditions} of the ordered realizations of the random variable. Hence, $x(S)\succeq v(S)$ if
\begin{equation*}
    r(S)\sum_{i=1}^k v(N,\omega_i)\geq \sum_{i=1}^k v(S,\omega_i),\ k={1,2,\ldots,T}.
\end{equation*}
    For $v(S)$ following the gamma distribution, $x(S)$ can be expressed as $x(S)\sim \Gamma(k_S,r(S)\cdot \theta_S)$. The conditions for the $\DCr\neq \emptyset$ follow directly from Lemma~\ref{lemma:ssd_conditions}.
\end{proof}

The conditions in Theorem~\ref{proposition:r_type_all_distr} are given in the form of fractions. If the denominator is zero, the conditions translate to the numerator of the fraction being less than or equal to $0$. For gamma distribution, this problem does not arise as $\Theta_N > 0$.

\section{Multiple risk-averse newsvendors}\label{sec:risk_av_news}
In this section, we study risk-averse behavior of players in the multiple newsvendors problem. We restrict our analysis to a single-period setting, i.e., only one order is placed. We follow the model outlined in introduction, which involves a centralized order decision (one order is placed) for the pooled inventory of players (players can satisfy demand of each other). To model risk-averse behaviour, we use the second-order stochastic dominance, which was not, to the best of our knowledge, considered in the literature in the case of multiple newsvendors problem. We apply results from the previous section on nonemptiness of the SSD-core and interpret these in the terms of the newsvendors problem.

A \emph{stochastic multiple newsvendors problem} is $((Y_S)_{S \subseteq N},c,p)$ where
\begin{align*}
    Y_S&\ldots \text{a random demand of a coalition $S$ for a single period}, \\
    c&\ldots \text{unit purchasing price for players}, \\
    p&\ldots \text{unit selling price}
\end{align*}
with $0 < c < p$.
In addition, an implicit function describing the profit of a coalition $S$, for a given parameter $q_S \in \R$, which represents the quantity of ordered units for coalition $S$, is assumed:

$$v(S,q_S)=p\cdot\min(Y_S,q_S)-c\cdot q_S.$$
This construction yields that $v(S,q)$ is not only random but also depends on parameter $q$. To obtain a stochastic TU-game, we set $v(S) = v(S,q^*)$ with the optimal value of order $q^*$ where $q_S^*=\arg \max_{q_S\in \R} \E[v(S,q_S)]$.
Value $v(S)$ describes the random profit of a coalition $S$ under the optimal order quantity $q_S^*$ which is derived under expectation.

\begin{defn}[Stochastic multiple newsvendors game]\label{def:multiple_news_risk_av}
Stochastic TU-game $(N,v)$ is a \emph{stochastic multiple newsvendors game of $((Y_S)_{S \subseteq N},c,p)$} if it is defined as
\begin{align*}
    v(S)&=p\cdot\min(Y_S,q^*_S) -c\cdot q^*_S,
\end{align*}    
where  $q_S^*=\arg \max_{q_S\in \R} \E[v(S,q_S)]$.
\end{defn}

Stochastic multiple newsvendors game is more general than the one in~\cite{ozen2011convexity}. The main difference is that we use \emph{random} characteristic function with optimal value of the parameter $q^*_S$, while they use \emph{deterministic} characteristic function with value of $S$ being equal to $\max_{q_S\in \R}\E[v(S,q_S)]$. This means our stochastic game still contains all the information about the demand distribution, even though it is only under optimal ordering based on expectation. Further, it can accommodate various risk approaches of the players, compared to~\cite{ozen2011convexity}, where all players are assumed to be risk-neutral.

In practice, Definition~\ref{def:multiple_news_risk_av} covers several scenarios. For instance, newsvendors order newspaper from a specified firm, and the unit purchasing price $c$ already includes the company's transportation cost for transferring newspaper from one newsvendor to another. In another scenario, the transportation costs may be negligible, or it might be assumed that there are none. One last scenario might be to assume that if a newsvendor cannot meet the demand, he can refer the customer to another newsvendor with whom he cooperates.

Our construction is mathematically sound as under SSD, the choice of $q^*_S$ for defining $v(S)$ is reasonable, since no $v(S,q_S)$ dominates $v(S,q_S^*)$ when $q_S\neq q_S^*$ if the maximum $v(S,q_S^*)$ is unique. This is due to the fact that under SSD a random variable $X$ cannot be dominated by a random variable $Y$ if $\E[X]>\E[Y]$. 

The question of cooperation among all risk-averse newsvendors translates to the nonemptiness of the SSD-core. Typically, random demand is assumed to follow a discrete distribution, reflecting the fundamental nature of inventory in discrete units. However, for large demand quantities, the computational complexity of handling discrete distributions grows significantly, making the problem increasingly intractable as the state space expands. Approximating the problem with continuous distributions presents a powerful alternative. Continuous models not only simplify the computational burden but also provide a smoother mathematical framework, facilitating more tractable solutions. In practical terms, continuous distributions capture real-world demand more flexibly, especially in industries where demand does not naturally divide into discrete units, such as energy or commodities measured in large quantities. Furthermore, deriving conditions for the nonemptiness of the SSD-core becomes less demanding with continuous distributions. This is because continuous formulations allow for more general conditions and avoid the combinatorial complexity inherent in discrete models. For these reasons, we restrict our analysis to continuous distributions.

\subsection{Cooperation under uniform distribution}
In the rest of this section, we focus on $Y_S \sim U[a_S,b_S]$ following uniform distribution and interpret possibility of cooperation of all newsvendors by analysing nonemptiness of the SSD-core. We assume the newsvendors agree on stochastic payoffs without transfer payments. In our analysis, we are unable to directly apply Proposition~\ref{proposition:r_type_all_distr}. Nevertheless, we can derive the distribution of $v(S)$ in terms of the demand distribution $Y_S$ for any subset $S\subseteq N$. Once this distribution is established, we can proceed by following the same proof structure as outlined in the proposition, adjusting for the derived form of $v(S)$.
From the definition of $q_S^*$, one can derive $q_S^*=F_{Y_S}^{-1}(\frac{p-c}{p})$ and specifically for $Y_S \sim U[a_S,b_S]$, we get $q_S^*=a_S+\frac{p-c}{p}(b_S-a_S)$. 
The distribution of $v(S)$ is a combination of a continuous uniform and a discrete distribution and can be described using the outcomes $\omega$ of distribution $Y_S$ as follows:
\begin{equation}\label{eq:news_char}
    v(S,\omega)=\begin{cases}  p\cdot \omega-c\cdot q_S^*  & \text{if} \ \omega\in [a_S,q_S^*)  \\
                                 (p-c)q^*_S          & \text{if} \ \omega\in [q_S^*,b_S]  %
        \end{cases}, 
 \end{equation}
The cumulative distribution function has the form of the uniform distribution up to $F_{Y_S}^{-1}\left(\frac{p-c}{p}\right)$, at which point it exhibits a jump to $1$.  For further clarification, we provide Figure~\ref{fig:alpha_cut}, which visually illustrates the shape of a cumulative distribution function with this structure. We refer to this type of distribution as an \emph{$\alpha$-cut uniform distribution}.

\begin{figure}
    \centering
    \includegraphics[scale=0.8]{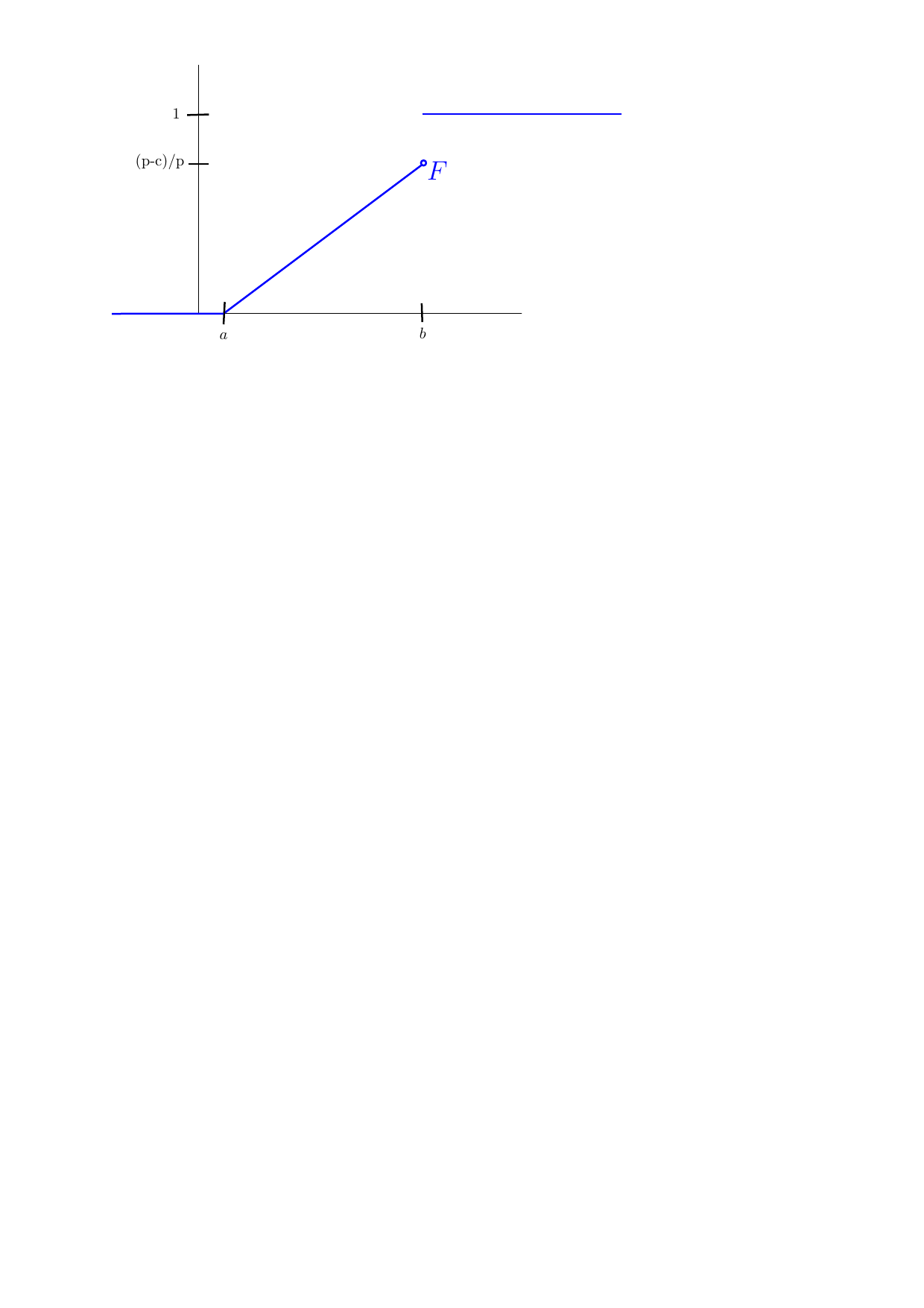}
    \caption{$(\frac{p-c}{p})$-cut uniform distributions.}
    \label{fig:alpha_cut}
\end{figure}

\begin{defn}[$\alpha$-cut uniform distribution]
Let $a_Z,b_Z\in \R$, where $a_Z<b_Z$ and $\alpha \in (0,1)$. A random variable $Z \sim U_\alpha[a_Z,b_Z]$ follows \emph{$\alpha$-cut uniform distribution} when it has the following cumulative distribution function:
\begin{equation}\label{eq:news_char_cdf}
    F_Z(x)=\begin{cases} 0  & \text{if} \ x<a_Z,  \\
    \frac{x-a_Z}{b_Z-a_Z}\cdot \alpha  & \text{if} \ x\in[a_Z,b_Z),  \\
    1         & \text{if} \ x\geq b_Z.
        \end{cases} 
 \end{equation}
\end{defn}
In the following lemma, we derive the SSD conditions for two $\alpha$-cut uniform distributions with the same $\alpha$.

\begin{lemma}[SSD condition for $\alpha$-cut uniform distribution]\label{lemma:alpha_cut}
Let $X$ and $Y$ be random variables both possessing an $\alpha$-cut uniform distribution for the same $\alpha\in(0,1)$ and let $a_X,a_Y,b_X$, and $b_Y$ be the corresponding parameters. Then $X\succeq_{SSD} Y$ if an only if 
$$a_X\geq a_Y \And (2-\alpha)\cdot b_X+\alpha\cdot a_X\geq (2-\alpha)\cdot b_Y+\alpha\cdot a_Y.$$    
\end{lemma}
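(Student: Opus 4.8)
The plan is to work throughout with the standard integral reformulation of second-order stochastic dominance: writing $I_Z(u)=\int_{-\infty}^u F_Z(z)\,dz$, the relation $X\succeq_{SSD}Y$ is equivalent to $I_X(u)\le I_Y(u)$ for every $u\in\R$. From the cumulative distribution function~\eqref{eq:news_char_cdf} of $U_\alpha[a_Z,b_Z]$ one computes in closed form that $I_Z$ vanishes on $(-\infty,a_Z]$, equals $\frac{\alpha(u-a_Z)^2}{2(b_Z-a_Z)}$ on $[a_Z,b_Z]$, and equals $\frac{\alpha(b_Z-a_Z)}{2}+(u-b_Z)$ on $[b_Z,\infty)$. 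In particular, letting $u\to\infty$ gives $\E[Z]=\tfrac12\big((2-\alpha)b_Z+\alpha a_Z\big)$, so the second inequality in the statement is nothing but $\E[X]\ge\E[Y]$. The whole lemma therefore reduces to proving
$$X\succeq_{SSD}Y\iff a_X\ge a_Y\ \text{ and }\ \E[X]\ge\E[Y].$$

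For the forward implication I would treat the two conditions separately. The mean inequality $\E[X]\ge\E[Y]$ is immediate, since the identity function is concave and Definition~\ref{def:second_stoch_dom} then yields $\E[X]\ge\E[Y]$ (equivalently, it is the $u\to\infty$ limit of $I_Y-I_X\ge0$). For $a_X\ge a_Y$ I would argue by contradiction: if $a_X<a_Y$, then on the nonempty interval $\big(a_X,\min\{a_Y,b_X\}\big)$ one has $F_X>0=F_Y$, whence $\int_{-\infty}^u\big(F_X(z)-F_Y(z)\big)\,dz>0$ for such $u$, contradicting the integral characterization of $X\succeq_{SSD}Y$.

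The substance of the lemma is the reverse implication, and the key idea is to show that $D:=F_X-F_Y$ is \emph{single-crossing}: there is a threshold $c\in[-\infty,\infty]$ with $D\le0$ on $(-\infty,c)$ and $D\ge0$ on $(c,\infty)$. Granting this, the running integral $G(u):=\int_{-\infty}^u\big(F_X(z)-F_Y(z)\big)\,dz=I_X(u)-I_Y(u)$ decreases while $D\le0$ and increases while $D\ge0$, hence is quasiconvex; since $G(-\infty)=0$ and $G(+\infty)=\E[Y]-\E[X]\le0$ (using $\int_{\R}\big(F_X-F_Y\big)=\E[Y]-\E[X]$ together with the mean condition), $G$ never exceeds $0$, which is precisely $X\succeq_{SSD}Y$. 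To establish single-crossing I would split on the order of the right endpoints. Because $a_X\ge a_Y$ already forces $D\le0$ on $(-\infty,a_X]$, and because $D$ is piecewise affine with an upward jump of $1-\alpha$ at $b_X$ and a downward jump of $1-\alpha$ at $b_Y$, the case $b_X\ge b_Y$ gives $D\le0$ everywhere (in fact first-order dominance of $X$ over $Y$), while the case $b_X<b_Y$ produces a single sign change on the affine piece $[a_X,b_X)$, where $D$ runs from a value $\le0$ at $a_X$ up to $\alpha\frac{b_Y-b_X}{b_Y-a_Y}>0$ at $b_X^-$; the upward jump at $b_X$ then keeps $D\ge0$ on $[b_X,b_Y)$ and $D\equiv0$ beyond $b_Y$. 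I expect the only real obstacle to be the bookkeeping of these breakpoint orderings, including the degenerate configurations where the supports barely overlap or are disjoint (e.g.\ $a_X\ge b_Y$); once single-crossing is in hand, the quasiconvexity argument closes the proof uniformly.
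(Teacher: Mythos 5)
Your proposal is correct and follows essentially the same route as the paper's proof: the integral characterization of SSD, necessity of $a_X\geq a_Y$ from the sign of $F_X-F_Y$ just above $a_X$, first-order dominance when $b_X\geq b_Y$, and in the case $b_X<b_Y$ a single crossing of the distribution functions making $I_X-I_Y$ decrease then increase, so that only the terminal value (your $G(+\infty)=\E[Y]-\E[X]$, the paper's $I(b_Y)$ --- the same quantity, since $F_X-F_Y$ vanishes beyond $b_Y$) needs to be checked. Your identification of the second condition as exactly $\E[X]\geq\E[Y]$ is a pleasant simplification of the bookkeeping but not a different argument.
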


\begin{proof}
    To derive the conditions, we use the formulation of SSD from Definition~\ref{def:second_stoch_dom} concerning cumulative distribution function, i.e., 
    \begin{equation}\label{eq:SSD_lemma}
        \forall u\in \R : I(u)=\int_{-\infty}^{u} ( F_X(z)- F_Y(z) ) \,dz\leq 0.
        \end{equation}
        We can easily see that for any $u\leq \min\{a_X,a_Y\}$, $I(u)=0$, thus we only need to analyze situation, where $u>\min\{a_X,a_Y\}$. We can further see that if $a_X<u<a_Y$ then $I(u)>0$, which means $X$ cannot dominate $Y$. Hence, a necessary condition for $X$ to dominate $Y$ is $a_X\geq a_Y$. For $a_X\geq a_Y$, we derive further conditions for $X$ to dominate $Y$. We can simply see that for $b_X\geq b_Y$, $F_X(u)\leq F_Y(u)$ for every $u \in \R$, thus, the variable $X$ dominates $Y$ (actually, in the first order stochastic dominance). Let the relation between $b_X$ and $b_Y$ be $b_X\leq b_Y$. Then the distribution function $F_X$ and $F_Y$ intersect on interval $(a_Y,b_Y)$ at a point denoted by $h$. The crucial observation is that $I(u)$ is decreasing on interval $(a_Y,h)$ and increasing on $(h,b_Y)$. This enables us to calculate the other condition for dominance just as $I(b_Y)\leq 0$ because if $I(b_Y)\leq 0$ then $I(u)\leq 0,\ \forall u\in\R$.
        It remains to calculate $I(b_Y)$:
          \begin{align*}
        I(b_Y)&=\int_{-\infty}^{b_Y} (F_X(z)- F_Y(z) ) \,dz=(b_Y-b_X)+\alpha\cdot \frac{b_X-a_X}{2}-\alpha\cdot\frac{b_Y-a_Y}{2}\\
        &= -b_X\cdot(1-\frac{\alpha}{2})-\frac{\alpha}{2}\cdot a_X+ (1-\frac{\alpha}{2})\cdot b_Y+\frac{\alpha}{2}\cdot a_Y
        \end{align*}
        Therefore, the second condition, which together with $a_X\geq a_Y$ makes the conditions for $X\succeq_{SSD} Y$, is as follows $$b_X\cdot(2-\alpha)+\alpha\cdot a_X \geq(2-\alpha)\cdot b_Y+\alpha\cdot a_Y.$$  
        Notice, that the condition works for $b_X>b_Y$. 
\end{proof}
\begin{rem}
    For $\alpha=1$, the SSD conditions for the $\alpha$-cut uniform distribution reduce to the conditions for a uniform distribution, where the lower and upper bounds, $a_X$ and $b_X$, correspond to those in Lemma~\ref{lemma:ssd_conditions}.
\end{rem}
Now that we have conditions for two $\alpha$-cut uniform distributions, we can  derive conditions for the payoffs from the SSD-core of a stochastic multiple newsvendors game.
\begin{thm}\label{thm:newsvendor_r}
Let $(N,v)$ be a stochastic multiple newsvendors game of $((Y_S)_{S \subseteq N},c,p)$, where $Y_S\sim U[a_S,b_S]$, $\forall S\subseteq N$ is uniformly distributed with parameters $a_S,b_S\in [0,\infty)$, $a_S<b_S$. Then the SSD-core $\DCr\neq \emptyset$ if and only if 
\begin{align*}
r(S)(a_N\cdot p-(b_N-a_N)\cdot c)\geq (a_S\cdot p-(b_S-a_S)\cdot c) \And \\
r(S)(a_N\cdot (p+c)+b_N(p-c))\geq a_S\cdot (p+c)+b_S\cdot(p-c).
\end{align*}
\end{thm}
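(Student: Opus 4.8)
The plan is to reduce the statement to a direct application of Lemma~\ref{lemma:alpha_cut}, exactly as the paragraph preceding the theorem suggests. The key preparatory observation is that, as already noted in the text, each $v(S)$ follows an $\alpha$-cut uniform distribution with the \emph{same} cut parameter $\alpha=\frac{p-c}{p}$, since this value depends only on $c$ and $p$, not on $S$. So first I would read off the lower and upper bounds of $v(S)$ from the piecewise description in~\eqref{eq:news_char}: the minimum is attained at $\omega=a_S$, giving a lower bound $p\,a_S-c\,q_S^*$, while the upper bound is $(p-c)q_S^*$. Substituting the explicit optimal order $q_S^*=a_S+\frac{p-c}{p}(b_S-a_S)$ and simplifying, I expect the clean closed forms
\begin{equation*}
a_{v(S)}=\frac{p-c}{p}\bigl(a_S\,p-(b_S-a_S)c\bigr),\qquad b_{v(S)}=\frac{p-c}{p}\bigl(a_S\,p+(p-c)(b_S-a_S)\bigr).
\end{equation*}

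Next I would establish that the candidate payoff $x(S)=r(S)\,v(N)$ stays inside the $\alpha$-cut uniform family with the same $\alpha$. Since $r(S)=\sum_{i\in S}r_i\geq 0$, multiplying $v(N)$ by the constant $r(S)$ rescales its cumulative distribution function so that $x(S)\sim U_\alpha[\,r(S)\,a_{v(N)},\,r(S)\,b_{v(N)}\,]$; the cut parameter is unchanged because the jump in the cdf still occurs at the same probability level. With both $x(S)$ and $v(S)$ now $\alpha$-cut uniform for a common $\alpha$, I can apply Lemma~\ref{lemma:alpha_cut} to $x(S)\succeq_{SSD}v(S)$, obtaining, for every $S\subseteq N$, the two inequalities $r(S)\,a_{v(N)}\geq a_{v(S)}$ and $(2-\alpha)\,r(S)\,b_{v(N)}+\alpha\,r(S)\,a_{v(N)}\geq (2-\alpha)\,b_{v(S)}+\alpha\,a_{v(S)}$. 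Efficiency forces $r(N)=1$, so both conditions hold with equality at $S=N$, consistent with $x(N)$ having the same distribution as $v(N)$.

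The final step is algebraic. The first inequality factors through $\frac{p-c}{p}>0$ and collapses immediately to $r(S)\bigl(a_N\,p-(b_N-a_N)c\bigr)\geq a_S\,p-(b_S-a_S)c$. For the second, I would use $2-\alpha=\frac{p+c}{p}$ and expand the combination $(2-\alpha)\,b_{v(S)}+\alpha\,a_{v(S)}$; after collecting terms I expect the cross terms to cancel so that it collapses to $\frac{p-c}{p}\bigl(a_S(p+c)+b_S(p-c)\bigr)$. Dividing the whole inequality by the positive factor $\frac{p-c}{p}$ then yields precisely the second stated condition. I anticipate this simplification to be the main obstacle: it is the only place where nontrivial cancellation occurs, and forcing the coefficients of $a_S$ and $b_S$ to combine exactly into $(p+c)$ and $(p-c)$ requires careful bookkeeping. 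A minor loose end is the degenerate case $r(S)=0$, where $x(S)$ becomes a point mass and Lemma~\ref{lemma:alpha_cut} does not literally apply; this can be checked directly against the definition of SSD and remains consistent with the stated inequalities.
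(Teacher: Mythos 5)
Your proposal is correct and follows essentially the same route as the paper's proof: identify $v(S)$ and $x(S)=r(S)\,v(N)$ as $\alpha$-cut uniform with the common cut $\alpha=\frac{p-c}{p}$, apply Lemma~\ref{lemma:alpha_cut}, and simplify; your closed forms for $a_{v(S)}$ and $b_{v(S)}$ and the claimed cancellation in the second condition all check out. The only (harmless) differences are that you substitute $q_S^*$ before invoking the lemma rather than after, and that you explicitly flag the degenerate case $r(S)=0$, which the paper leaves implicit.
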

\begin{proof}
    We denote the parameters of $v(S)$ %at points $a_S$ and $q^*_S$ 
    as $\aS$, $\bS$ to distinguish them from parameters of $Y_S \sim U[a_S,b_S]$. For the distribution of $v(S)$, as described in~\eqref{eq:news_char}, the lower bound $\aS$ is obtained for $\omega=a_S$, i.e., $\aS=p\cdot a_S-c\cdot q_S^*$  and the upper bound is obtained at point $\omega=q_S^*$ where the cumulative distribution function is not continuous, i.e., $\bS = (p-c)\cdot q_S^*$. The distribution of $v(S)$ can be thus expressed as
    \begin{equation*}
    v(S)\sim U_\alpha[p\cdot a_S-c\cdot q_S^*,(p-c)\cdot q_S^*].
    \end{equation*}
    Further, the distribution of $x(S) = r(S) \cdot v(N)$ can be expressed in terms of $\aS$, $\bS$ as $x(S) \sim U_{\alpha}[r(S)\cdot \aN, r(S)\cdot \bN]$ or equivalently as
    \begin{equation*}
        x(S) \sim U_{\alpha}[r(S) \left( p\cdot a_N-c\cdot q_N^*\right), r(S)\cdot (p-c)\cdot q_N^*].
    \end{equation*}
    We employ Lemma~\ref{lemma:alpha_cut} to reformulate $x(S)\succeq_{SSD} v(S)$ using two conditions. The first condition from the lemma is of form
    $$r(S)(p\cdot a_N - c \cdot q_N^*)\geq p\cdot a_S - c \cdot q_S^*,$$
    which can be rewritten using $q_S^*=a_S+(b_S-a_S)(\frac{p-c}{p})$ as 
$$r(S)\cdot\left(a_N(p-c+\frac{p-c}{p}\cdot c)-b_N\frac{p-c}{p}\cdot c\right)\geq a_S(p-c+\frac{p-c}{p}\cdot c)-b_S\frac{p-c}{p}\cdot c$$
and further simplified to
$$r(S)(a_N\cdot p-(b_N-a_N)\cdot c)\geq (a_S\cdot p-(b_S-a_S)\cdot c). $$

The second condition of Lemma~\ref{lemma:alpha_cut} can be expressed as
\begin{align*}
  &r(S) \left(\frac{p+c}{p}\cdot(p-c)\cdot q_N^*+(a_N\cdot p-c\cdot q_N^*) \cdot\frac{p-c}{p}\right)\geq \\
  &\frac{p+c}{p}\cdot(p-c)\cdot q_S^*+(a_S\cdot p-c\cdot q_S^*)\cdot\frac{p-c}{p}.  
\end{align*}
This can be further simplified to:
$$q_S^*+a_S\geq r(S)(q_N^*+a_N).$$
By plugging in the optimal value $q_S^*=a_S+(b_S-a_S)(\frac{p-c}{p})$ we obtain:
$$r(S)(a_N\frac{p+c}{p}+b_N\frac{p-c}{p})\geq a_S\frac{p+c}{p}+b_S\frac{p-c}{p}.$$
    
\end{proof}

Conditions from Theorem~\ref{thm:newsvendor_r} have an interesting interpretation in terms of the model. The first condition,
$$r(S)(a_N\cdot p-(b_N-a_N)\cdot c)\geq (a_S\cdot p-(b_S-a_S)\cdot c),$$
can be viewed as \emph{protection against demand fluctuation}. Both left and right hand sides of the condition can be interpreted as the gross profit of coalition $S$ if the lowest demand is realized, minus the loss in the worst case, i.e., when $b_S$ was ordered and only $a_S$ realized. Under $r$, the newsvendors are better off in $N$. Players in $S$ will want to cooperate in the grand coalition if the protection against demand fluctuation they can guarantee for themselves is less than or equal to the share of the grand coalition’s protection against demand fluctuation that they can grab. 

The second condition
$$r(S)(a_N\cdot (p+c)+b_N(p-c))\geq a_S\cdot (p+c)+b_S\cdot(p-c)$$
is a bit more intricate to interpret. If we rewrite it in the following form,
$$r(S)\left(p\cdot\frac{a_N+b_N}{2}-c\cdot \frac{b_N-a_N}{2} \right)\geq p\cdot\frac{a_S+b_S}{2}-c\cdot \frac{b_S-a_S}{2},$$
one can view $p\cdot (a_S+b_S)/2$ as the expected net income of coalition $S$ and $c\cdot (b_S-a_S)/2$ as potential expected loss from choosing $q_S^*$ instead of $b_S$. These two terms collectively represent the \emph{market quality within coalition $S$}. Consequently, players within $S$ want to cooperate within the grand coalition, if their portion of market quality in $N$ is at least as good as the market quality within the coalition $S$. The SSD-dominating conditions thus suggest that such market quality is important for risk averse players when considering cooperation in the stochastic multiple newsvendors game.

To summarize, risk averse newsvendors ask these questions when deciding whether to cooperate with other newsvendors in the already thoroughly explained situation:
\begin{itemize}
    \item Is the portion of the protection against demand fluctuation in $N$ at least as high as  within $S$?
    \item Is the quality of the market within $S$, specifically, its expected net profit and expected loss from buying the optimal quantity $q_S^*$ instead of the maximal quantity $b_S$, at most as good as the portion of the quality of the market within $N$?
\end{itemize}

We conclude by noting that we derived a similar analysis for stochastic payoffs with transfer payments as in Theorem~\ref{thm:unif_dr}. However, the results did not appear to offer meaningful insights in the context of the model, and due to the extensive calculations involved, we decided not to include them here.

\section{Conclusion}
In this paper, we addressed the challenge of modeling risk-averse behavior in cooperative games under uncertainty by introducing the concept of the SSD-core. This approach overcomes the restrictions and impracticalities of traditional models that often require specifying exact levels of risk aversion or specific utility functions for each player. By leveraging second-order stochastic dominance, our SSD-core provides a robust solution concept that is acceptable to all risk-averse players without the need to specify their exact risk preferences.

Our main contribution lies in establishing connections between the SSD-core and the cores of associated deterministic cooperative games, which enables us to derive conditions under which players cooperation is possible. We illustrated our methods using a special allocation type under normal and uniform distributions and provided an overview of similar results for other allocation types and different distributions. We then applied our SSD-core framework to the multiple newsvendors problem, a classic example in inventory management and cooperative game theory. By modeling the game as inherently random and focusing on risk-averse players, we identified fair and stable allocations acceptable to all newsvendors without needing to specify their exact levels of risk aversion. Our approach differs from traditional models that rely on expected profits under optimal ordering, offering new insights into how cooperation can be beneficial under demand uncertainty. We provided a characterization of risk-averse behavior of players with interpretations in terms of the model, highlighting the practical applicability of our theoretical findings.

While the SSD-core provides a robust framework, it has limitations. It cannot account for covariances between coalitions, potentially missing key interdependencies.  It is often impossible to compare two different distributions, e.g. uniform and normal. Moreover, SSD does not always yield a clear ordering, limiting the SSD-core’s applicability. In~\ref{app:undominated-ssd}, we explored a generalization of SSD-core to non-comparable distributions, however, we illustrate that it may lead to irrational outcomes even if we compare normal distributions. Additionally, using the allocation types we rely mostly on scale-location family distributions which restricts generality of the approach.

Our theoretical results can be easily augmented for any distribution within scale-location or scale family depending on the allocation type. A direct research direction in this area is exploring how different allocation types influence the volume of the SSD-core.
Our results can likely be extended to other demand distributions with bounded support, including discrete distributions with finite outcomes, but may not apply to unbounded distributions like the normal.

Our findings in the multiple newsvendors problem highlight the potential for further exploration and application of the SSD-core framework. While we focused on a specific allocation type and distribution to demonstrate the applicability, there is no specific reason why other settings could not be investigated. For instance, extending our approach to bounded demand distributions, such as beta distribution or any discrete distributions with a finite number of realizations, could provide valuable insights or even similarly direct interpretation as we obtained for uniform demand distribution. Additionally, incorporating factors such as transshipment costs or other operational variables could further enhance the practical relevance of the model.

%% SP 2008/03/01
%%
%% 
%%
%% $Id: elsarticle-template-num.tex 190 2020-11-23 11:12:32Z rishi $
%%
%%

\section*{Acknowledgements}
The project was supported by the Czech Science Foundation grant no. P403-22-11117S.
The second author was further supported by the Charles University Grant Agency (GAUK 206523) and the Charles University project UNCE 24/SCI/008. 

 \bibliographystyle{elsarticle-harv} 
 \bibliography{cas-refs}

 %% The Appendices part is started with the command \appendix;
%% Appendix sections are then done as normal sections

\appendix

\section{Unstructured allocation}\label{app:unstructed-allocations}
In our paper, we focused on several allocation types of stochastic payoffs. Here, we discuss the general setting, i.e., stochastic payoff $x$ as a multivariate random variable without further restrictions on its covariance structure. We generalize the result from Theorem~\ref{thm:norm_dr}, which allows to view allocations with transfer payments $(d,r)$ and the generalized $(d,r_\pm)$ as a reasonable framework for payoff allocations. 

\begin{thm}[Unstructured allocation under normal distribution]\label{thm:norm_random}
    Let $(N,v)$ be a stochastic TU-game, where $v(S)\sim N(\mu_S,\sigma_S^2), \ \forall S\subseteq N$ is normally distributed with parameters $\mu_S\in \R, \sigma_S^2\in \R_+$.
    We assume that $x$ has a multivariate normal distribution $x\sim N_n(\overline{\mu},\Sigma)$ with $\overline{\mu}\in \R^n$, $\Sigma\in \R^{n\times n}$ and $\textbf{Var}(x_i)=\Sigma_{ii}=\overline{\sigma}^2_i$. Then $\textbf{DC}(v)\neq \emptyset$ $\iff$
    $\core(\mu)\neq \emptyset$ $\&$ $\core(\sigma^2)\neq \emptyset$.
    Specifically, $x\in \textbf{DC}(v)$ if and only if it satisfies the following conditions:
    \begin{align*}
       \forall S\subseteq N \ : \mu_S &\leq \sum_{i\in S}\overline{\mu}_i , \\
        \forall S\subseteq N \ : \sigma_S^2 &\geq \sum_{i,j\in S} \rho_{i,j}\overline{\sigma}_i \overline{\sigma}_j=\Sigma_{ij}.         
    \end{align*}
\end{thm}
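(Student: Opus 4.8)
The plan is to reduce the multivariate statement to the known characterization of SSD for normal random variables (the first bullet of Lemma~\ref{lemma:ssd_conditions}) and then recognize the resulting inequalities as core and cost-core membership conditions, exactly as in Theorem~\ref{thm:norm_dr}. The key observation is that if $x \sim N_n(\overline{\mu},\Sigma)$, then for every coalition $S$ the aggregate payoff $x(S) = \sum_{i \in S} x_i$ is again normally distributed, as a sum of jointly normal random variables. I would first compute its two parameters explicitly: its mean is $\E[x(S)] = \sum_{i \in S}\overline{\mu}_i$, and its variance is $\var[x(S)] = \sum_{i,j \in S}\Sigma_{ij} = \sum_{i,j\in S}\rho_{i,j}\overline{\sigma}_i\overline{\sigma}_j$, where the double sum accounts for all covariance terms between members of $S$ (including the diagonal variances when $i=j$).

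With these two parameters in hand, I would apply the normal-distribution characterization from Lemma~\ref{lemma:ssd_conditions} to the requirement $x(S) \succeq_{SSD} v(S)$. Since both $x(S)$ and $v(S)$ are normal, SSD holds if and only if the mean of $x(S)$ is at least the mean of $v(S)$ and the variance of $x(S)$ is at most the variance of $v(S)$. This yields precisely the two families of inequalities in the statement: $\mu_S \leq \sum_{i \in S}\overline{\mu}_i$ for every $S$, and $\sigma_S^2 \geq \sum_{i,j\in S}\rho_{i,j}\overline{\sigma}_i\overline{\sigma}_j$ for every $S$. Together with the efficiency requirement that $x(N)$ have the same distribution as $v(N)$ (forcing equality at $S=N$ in both families), these are the membership conditions characterizing $x \in \textbf{DC}(v)$.

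It remains to translate the existence of such an $x$ into the nonemptiness of the two classical cores. The mean conditions $\mu_S \leq \overline{\mu}(S)$ together with $\overline{\mu}(N) = \mu_N$ are exactly the statement $\overline{\mu} \in \core(\mu)$, so solvability of the mean family is equivalent to $\core(\mu)\neq\emptyset$. For the variance family, I would collect the coalitional variances into a vector $w$ with $w_S = \sum_{i,j\in S}\Sigma_{ij}$ and observe that, unlike the structured case in Theorem~\ref{thm:norm_dr}, the map $\Sigma \mapsto (w_S)_S$ need not be onto the additive cone; the natural reading is that the variance family is solvable precisely when $\core(\sigma^2)\neq\emptyset$, where $\sigma^2$ denotes the game $S \mapsto \sigma_S^2$. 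The main obstacle, and the step requiring genuine care, is this variance direction: I would need to verify that one can realize an admissible covariance matrix $\Sigma$ whose coalitional variances $w_S$ lie below $\sigma_S^2$ for all $S\subsetneq N$ while meeting $w_N = \sigma_N^2$, and conversely that any such target is achievable by a positive semidefinite $\Sigma$. The forward direction is immediate, but the converse — constructing a valid (positive semidefinite) $\Sigma$ from a prescribed point in $\core(\sigma^2)$ — is where the argument must exploit the freedom in the off-diagonal correlations $\rho_{i,j}$, and I expect this to be the crux of the proof.
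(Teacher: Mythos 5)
Your computation of $\E[x(S)]$ and $\var[x(S)]=\sum_{i,j\in S}\Sigma_{ij}$ followed by the normal-distribution SSD criterion of Lemma~\ref{lemma:ssd_conditions} is exactly the paper's argument---the paper's proof consists of precisely this variance computation plus a pointer to the proof of Theorem~\ref{thm:norm_dr}---so the membership characterization (the two displayed families of inequalities) is obtained the same way in both. Where you go beyond the paper is in flagging that the passage from these inequalities to ``$\core(\sigma^2)\neq\emptyset$'' is not automatic, because $S\mapsto\sum_{i,j\in S}\Sigma_{ij}$ is not an additive set function; the paper is silent on this point. However, you have located the difficulty in the wrong direction. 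The step you call the crux---realizing a prescribed additive variance allocation $y$ with $y\geq 0$, $y(S)\leq\sigma_S^2$ for all $S$ and $y(N)=\sigma_N^2$ by an admissible covariance matrix---is trivial: take $\Sigma=\mathrm{diag}(y)$, which is positive semidefinite and gives $\var[x(S)]=y(S)$, so no off-diagonal freedom is needed at all. The genuinely delicate direction is the converse: given an arbitrary positive semidefinite $\Sigma$ satisfying $\sum_{i,j\in S}\Sigma_{ij}\leq\sigma_S^2$ for all $S$ with equality at $N$, one must extract a nonnegative additive vector dominated by $\sigma^2$, and positive off-diagonal covariances make the coalitional variances strictly superadditive (consider equicorrelated components with correlation $1/2$ and unit marginal variances), so this extraction does not follow from the membership conditions alone. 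Neither your sketch nor the paper's proof closes that direction; if you pursue the stated equivalence with the core of the variance game, you should either supply this argument or identify the restriction on $\Sigma$ under which it holds.
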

\begin{proof}
    We can proceed similarly as in the proof of Theorem~\ref{thm:norm_dr}. We just need to work straight with the variance instead of standard deviation game. Here, we compute only the variance of $x(S),\  S\subseteq N$ to get the second inequality:
    \begin{equation*}
        \textbf{Var}[x(S)]=\sum_{i,j\in S}\textbf{cov}(x_i,x_j)=\sum_{i,j\in S}\rho_{i,j}\overline{\sigma}_i\overline{\sigma}_j.
    \end{equation*}
\end{proof}
Consider two stochastic TU-games, the first denoted by $(N,z)$ is the one defined in Theorem~\ref{thm:norm_random}, i.e., payoff $x$ is a multivariate random variable. The second game denoted by $(N,v)$ , i.e., game follows the model assumed in Theorem~\ref{thm:norm_dr}.
In this setup, when the correlation coefficient $\rho_{i,j}=1, \ \forall i,j\in N$, i.e., $x_i$ and $x_j$ are perfectly correlated for any players $i,j\in N$, and the variance of each $x_i$ is defined as $\textbf{Var}(x_i)\equiv\overline{\sigma}^2_i=r_i^2\sigma_N^2$, where $r_i\geq 0 \ \forall i\in N$ and $r(N)=1$, then $\textbf{DC}(z)=\DCdr$.  

The previous observation demonstrates that under the assumption of fully correlated payoffs, i.e.,\ $\rho_{i,j}=1, \ \forall i,j\in N$, the model employing the general allocation type can be described by the model with allocations with transfer payments. This assumption is intuitively aligned with the notion of coalition formation, where cooperating players are likely to exhibit correlated payoffs. Such a reasoning can be extended to allocation with transfer payment and general risk part $(d,r)$, where the pairwise correlations can be $-1$, which indicates perfectly inversely correlated payoffs.
This corollary supports our focus on studying coalitions where all players within the group exhibit strongly correlated values, with the pairwise correlations absolute value $1$. Such a setting helps in understanding the dynamics and payoff distributions within cooperating groups.

Further exploration into the stochastic payoffs in cooperative games might consider the implications of arbitrary correlations among the marginal distributions of the random payoffs. This approach, although potentially less tractable, raises interesting questions about the definition of coalition formation and the appropriate methods for distributing profits among various coalitions. The complexity of arbitrary correlations presents challenges in precisely defining what it means for a coalition to form and how its profits should be allocated.

\section{Undominated SSD-core}\label{app:undominated-ssd}
This section deals with the \emph{undominated SSD-core}, a generalization of the SSD-core introduced in the paper. We discuss the advantages and disadvatanges on several examples.
\begin{defn}[Undominated SSD-core]
    Let $(N,v)$ be a stochastic TU-game. The Undominated SSD-core is a set of efficient stochastic payoffs $x$ denoted by $\textbf{UDC}(v)$ for which it holds that
    $$ \not\exists S\subseteq N: x(S)\prec_{SSD}v(S) \And x(N)\text{ has the same distribution as }v(N).$$
\end{defn}
Similarly to the SSD-core we can distinguish among multiple allocations types. The undominated SSD-core concides with the SSD-core on complete orders, however covers more stochastic payoffs once the order is not complete. The following example  demonstrates that despite this advantage, the concept is far from perfect.
\begin{example}\label{example:undominated_core}
    Let $(N,v)$ be a 2-player stochastic TU-game with characteristic functions $v(\{1\})\sim N(10,1)$, $v(\{2\})\sim N(10,1)$, and $v(\{1,2\})\sim N(2,10)$.
    It is immediate that $\DCdr = \emptyset$ since there are no  $d_1$, $d_2 \in \R$ satisfying $$d_1+d_2=2 \And d_1\geq 10 \And d_2\geq 10.$$
However, the following set of payoffs lies in $\UDCdr$:
    $$(d_1,d_2,r_1,r_2)=(\alpha,2-\alpha,\beta,1-\beta),\ \text{where} \ \alpha>10,\beta>\frac{9}{10}$$
    and symmetrically for those stochastic payoffs where the roles of $1$ and $2$ is switched.
    We immediately see that $\UDCdr$ does not have to be bounded, and what is more, for large enough $\alpha$ and $\beta$, includes stochastic payoffs under which one player receives a much higher expected and the variance value than the other even though their role in the game is symmetric.
\end{example}
We note that despite the unboundedness in some of the scenarios, the undominated core is not guaranteed to be nonempty. To see this, consider a modification of Example~\ref{example:undominated_core}, where $v(\{i\}) \sim N(10,20)$ for $i=1,2$. Both of these examples argue against the use of this concept. In the next example, we outline a situation in which the undominated core might be reasonable.
\begin{example}
    Let $(N,v)$ be a stochastic TU-game with $v(S)\sim N(\mu_S,\sigma_S^2)$. In this game, the agents incur fixed costs $c_i$, and it is agreed that each agent will receive a payment of $d_i = \frac{c_i}{\sum_{k \in N}c_k}$. Now, before the value of the game is realized, they have to agree upon the rearrangement of the final payoff. A stochastic payoff from SSD-core might be desirable, however, vector $d$ might violate the conditions $d(S)\geq \mu_S$. However, any $r$ satisfying $\sigma_S>r(S)\sigma_N$, $\forall S\subseteq N$
    implies that together with $d$, vector $(d,r)$ lies in the undominated SSD-core.
\end{example}

%% else use the following coding to input the bibitems directly in the
%% TeX file.

% \begin{thebibliography}{00}

% %% \bibitem{label}
% %% Text of bibliographic item

% \bibitem{}

% \end{thebibliography}
\end{document}